\documentclass[12pt,a4paper]{article}

\usepackage[utf8]{inputenc}
\usepackage[T1]{fontenc}

\usepackage{amsmath,amssymb,amsthm}
\usepackage{mathtools}
\usepackage{bm}

\usepackage[margin=1in]{geometry}
\usepackage{setspace}
\usepackage{graphicx}
\usepackage{booktabs}
\usepackage{array}
\usepackage{caption}
\usepackage{subcaption}
\usepackage{float}

\usepackage{algorithm}
\usepackage{algpseudocode}

\usepackage[numbers,sort&compress]{natbib}
\usepackage[colorlinks=true,linkcolor=blue,citecolor=blue,urlcolor=blue]{hyperref}

\usepackage{xcolor}
\usepackage{enumitem}
\usepackage{gensymb}
\usepackage{siunitx}

\newtheorem{theorem}{Theorem}

\newtheorem{assumption}{Assumption}
\theoremstyle{definition}

\theoremstyle{remark}

\newcommand{\Nc}{\mathcal{N}}
\newcommand{\Cc}{\mathcal{C}}
\newcommand{\Mc}{\mathcal{M}}
\newcommand{\Lc}{L_{\mathcal{C}}}
\newcommand{\Rset}{\mathcal{R}}
\newcommand{\degC}{$^{\circ}$C}

\graphicspath{{./figures/}}

\begin{document}

\title{\textbf{RASC: Region-Aware Self-Calibration\\
       for Dense 2D Sensor Arrays}}

\author{
  Yinglei Ma$^{1,*}$ \quad Fei Xiao$^{2}$\\[4pt]
  \small $^{1}$Fudan University, Shanghai 200433, China;
         \texttt{19110300058@fudan.edu.cn}\\
  \small $^{2}$Fudan University, Shanghai 200433, China;
         \texttt{feixiao@fudan.edu.cn}\\[4pt]
  \small $^{*}$Correspondence: \texttt{19110300058@fudan.edu.cn}
}

\date{Preprint --- \today}
\maketitle

\begin{abstract}
BJT-based 2D temperature-sensor arrays are factory-calibrated to $\pm0.1$\degC,
but post-deployment thermal and mechanical stresses drift their per-sensor
gain--offset parameters by an order of magnitude, and in-lab recalibration is
impractical. We present \textbf{RASC} (Region-Aware Self-Calibration), a
five-stage algorithm that decomposes the global ill-posed problem into local
cluster-level problems, runs robust alternating estimation (trimmed-mean field
reconstruction + Huber IRLS) inside each cluster, and reconciles overlapping
estimates by linear consensus on the cluster-overlap graph with provable
exponential convergence. On 7{,}632 frames from a deployed $16\times16$ array
exhibiting $\approx5\times$ factory-spec non-uniformity, RASC cuts the
locally-non-smooth fixed-pattern residual by $71\pm5\%$ (10-fold CV), restoring
$\pm0.1$\degC\ accuracy while perturbing the calibrated field by only
$0.041$\degC\ RMSE; reduction concentrates at the edges ($78\%$ vs.\ $55\%$
interior). In simulations on $8\times8$ to $32\times32$ arrays, RASC matches an
oracle centralized EKF within $0.10$\degC\ with $\approx4\times$ lower bandwidth.
\end{abstract}

\noindent\textbf{Keywords:} temperature-sensor arrays; BJT temperature sensors;
in-situ self-calibration; distributed estimation; robust statistics;
consensus algorithms; post-deployment drift

\section{Introduction}
\label{sec:intro}

Dense 2D temperature-sensor arrays that use the temperature coefficient effect
of bipolar junction transistors (BJTs) are widely employed in the thermal
monitoring of power electronics, battery packs, and industrial process
equipment. Each pixel is based on the temperature-dependence of the
base-emitter voltage $V_\text{BE}$, which is a well-known physical phenomenon
and the basis of modern BJT-based temperature sensors achieving sub-$0.1$\degC\
inaccuracy after factory calibration \cite{Wei2023,Qin2022}. After the array is
integrated into the target system, three deployment-specific stresses affect
each die differently: (i) reflow soldering thermal cycling introduces residual
mechanical stress at the $V_\text{BE}$ junction, thereby changing both gain and
offset; (ii) the local PCB heat-dissipation path varies across the array, and
edge pixels lose heat asymmetrically through nearby copper pours, mounting holes,
or chassis contacts, resulting in spatially structured biased readings; and (iii)
packaging stress, humidity ingress, and slow electromigration further perturb
$V_\text{BE}$ over months of operation. As a result, the deployed array no longer
meets the $\pm0.1$\degC\ factory specification; that is, the per-sensor affine
parameters $(a_i, b_i)$ have drifted from their stored factory values, and
taking each module offline for laboratory recalibration is unfeasible.

The recorded data analyzed in this paper show this drift directly. Although the
device was factory-calibrated to $\pm0.1$\degC, the deployed $16\times16$ array
has a peak-to-peak spatial non-uniformity of $0.569$\degC\ in the steady-state
field, approximately five times higher than the factory specification. Since the
underlying thermal scene is locally smooth by physical argument, this excess
non-uniformity must be due to post-deployment drift in $(a_i, b_i)$. The
technical difficulty is that the in-situ setting provides no labelled inputs:
the actual thermal field $x(\mathbf{p},t)$ at each sensor location is unknown,
no controlled stimulus is applied, and any algorithm must rely only on the
structural prior that the physical field is locally smooth. Since the field is
unobserved, any recovered estimate of $(a_i,b_i)$ is identifiable only up to a
global linear shift unless a small fraction of nodes provides external anchoring.
We therefore study a realistic scenario where a small reference set $\Rset$,
with $|\Rset|/n \ll 1$, can retain accurate factory metadata, and the remaining
nodes are recalibrated based on co-observed data alone.

Existing approaches fall into three categories. (a) Classical centralized
non-uniformity correction (NUC) for IR focal-plane arrays assumes a uniform
stimulus or scene-based motion \cite{Friedenberg1998,Ratliff2002}, but an
external excitation is required that is unavailable in sealed-deployment
scenarios. (b) Distributed self-calibration in WSN literature
\cite{Bychkovskiy2003,Wang2017} estimates relative parameters through pair-wise
consistency, but generally requires either a known stimulus or restrictive
parametric assumptions about the field. (c) Recent learning-based calibration
methods \cite{FaghihNiresi2023,Maag2018} achieve high accuracy but are limited
by large training corpora and per-deployment fine-tuning, and offer no
convergence guarantees under unseen conditions.

RASC is an entirely unsupervised, five-stage algorithm that does not assume a
parametric field model, runs natively in a decentralized manner when desired,
and provides explicit convergence guarantees. It decomposes the global ill-posed
problem into a sequence of local well-posed problems by dividing the array into
overlapping spatial clusters in which the physical field is assumed to be
well-approximated by a single common signal. In each cluster it runs robust
alternating estimation, alternating between (i) reconstructing the local field
using the trimmed mean of calibrated readings and (ii) re-fitting the $(a,b)$
parameters of each sensor via iteratively re-weighted least squares (IRLS) with
Huber loss. We use the term \emph{alternating robust estimation} rather than
expectation-maximization because the field-reconstruction step is a deterministic
robust estimator rather than a conditional expectation under a probabilistic
model. Sensors belonging to multiple clusters perform linear consensus iterations
to resolve discrepancies; convergence is exponential and determined by the
algebraic connectivity $\lambda_2(\Lc)$ of the cluster-overlap graph.

\subsection{Contributions}
The main contributions of this paper are:
\begin{itemize}[leftmargin=*]
  \item A principled combination of region-based decomposition, robust local
    estimation, and inter-cluster consensus in a five-stage algorithm with
    deterministic, side-information-free cluster-head election and explicit
    convergence properties of all three components.
  \item Three convergence and robustness results adapted to the cluster-overlap
    setting: (Theorem~\ref{thm:cluster}) per-cluster alternating-estimation
    monotone descent under cluster-Hessian positive-definiteness;
    (Theorem~\ref{thm:consensus}) inter-cluster consensus exponential rate
    $\rho \le 1 - \alpha\lambda_2(\Lc)$, specializing the classical
    Olfati-Saber--Murray result \cite{OlfatiSaber2004} to the cluster-overlap
    graph; (Theorem~\ref{thm:breakdown}) finite-sample replacement breakdown of
    the trimmed-mean field reconstruction.
  \item Comprehensive simulation studies at three array scales ($8\times8$,
    $16\times16$, $32\times32$) with 30 independent runs each, comparing RASC
    against an oracle centralized EKF, the BMEP edge-pairwise baseline
    \cite{Bychkovskiy2003}, and four simpler baselines.
  \item A real-sensor study on 7{,}632 frames from a deployed $16\times16$
    BJT-based temperature array that had drifted to $5\times$ its factory
    specification. RASC reduces the locally-non-smooth fixed-pattern residual by
    $71\pm5\%$ (10-fold cross-validation), with edge reduction of $78\%$
    vs.\ $55\%$ interior.
  \item An honest identifiability discussion of when per-sensor $(a,b)$ recovery
    is degenerate along a one-dimensional gauge in narrow-range fields, and when
    the gauge ambiguity is practically significant.
\end{itemize}

\subsection{Paper Organisation}
Section~\ref{sec:related} surveys related work. Section~\ref{sec:problem}
formally defines the sensor model and identifiability conditions.
Section~\ref{sec:algorithm} presents the five-stage RASC algorithm.
Section~\ref{sec:theory} provides the theoretical analysis.
Section~\ref{sec:simulation} reports Monte Carlo simulation results.
Section~\ref{sec:realdata} covers the real-sensor evaluation.
Section~\ref{sec:discussion} discusses limitations and extensions.
Section~\ref{sec:conclusion} concludes.

\section{Related Work}
\label{sec:related}

\subsection{Centralised Non-uniformity Correction}
Non-uniformity correction has a long history in dense IR focal-plane array
imaging \cite{Rogalski2012}. The original shutterless NUC system was introduced
by Narendra and Foss \cite{Narendra1981}, and the linear theory of two-point NUC
was subsequently developed by Perry and Dereniak \cite{Perry1993}. Two-point and
multi-point factory NUC \cite{Friedenberg1998} produce a per-pixel $(a,b)$
lookup table calibrated with external references, but require controlled stimuli
that cannot be re-applied in sealed-deployment scenarios. Scene-based NUC
algorithms \cite{Harris1999,Ratliff2002} relax this requirement but converge
slowly with small scene motion and degrade if stationary structure aligns with
fixed-pattern noise. More recent work extends classical scene-based NUC with
principal-component decomposition \cite{Lu2023} and median-ratio statistics
\cite{Ding2020}. RASC operates on a single static or slowly varying scene and
exploits spatial smoothness of the background field instead of temporal motion
or external stimuli.

\subsection{Distributed Calibration in WSNs}
Bychkovskiy et al.\ \cite{Bychkovskiy2003} proposed post-deployment
recalibration by pair-wise relative parameter estimation followed by global
consistency optimisation. Whitehouse and Culler \cite{Whitehouse2002} proposed a
related framework using broadcast calibrated reference exposures. Compared with
these classical methods, RASC offers (i) a multi-cluster decomposition that
better utilises dense 2D structure; (ii) explicit robustness to outlier sensors
via trimmed mean and Huber IRLS; (iii) a convergence-rate certificate
$\rho \le 1 - \alpha\lambda_2(L)$; and (iv) validation on a 256-sensor dense
array. The most closely related contemporary work is Ahmad \cite{Ahmad2024} and
Mahajan and Helbing \cite{Mahajan2025}; both motivate the cluster-and-consensus
combination that RASC formalises with explicit convergence and breakdown-point
guarantees.

\subsection{Robust and Median-Based Estimation}
Trimmed-mean and median estimators are well-known robust statistics
\cite{Huber2009,Hampel1986}. Theorem~\ref{thm:breakdown} specializes these
classical bounds to the per-cluster setting. MAD-based screening at the
neighbourhood level is standard practice for fault detection \cite{Chandola2009},
and RASC is novel in integrating it with cluster-head elections and the resulting
computation-communication trade-off.

\subsection{Consensus Algorithms}
Linear distributed averaging and its convergence properties are well-known
\cite{OlfatiSaber2004,Xiao2004}. We apply the Olfati-Saber--Murray result to the
cluster-overlap graph rather than the sensor proximity graph. Empirical studies
confirm that algebraic connectivity remains the dominant predictor of consensus
convergence rate across diverse network classes \cite{Sirocchi2022}.

\subsection{Learning-Based Calibration}
Works \cite{FaghihNiresi2023,Maag2018} have applied graph neural networks and
self-supervised representation learning to sensor calibration. These approaches
outperform model-based pipelines when training data closely match the deployment
environment, but require per-application data collection and offer no analytical
convergence guarantees for out-of-distribution inputs. The classical model-based
BMEP baseline \cite{Bychkovskiy2003} serves as the reference point for the
no-training regime that RASC aims to achieve.

\section{Problem Formulation}
\label{sec:problem}

\subsection{Sensor Model}
An array of $n$ sensors arranged in a planar grid produces measurements
\begin{equation}
  y_i(t) = a_i\,x(\mathbf{p}_i,t) + b_i + \varepsilon_i(t),
  \quad i = 1,\ldots,n,
  \label{eq:sensormodel}
\end{equation}
where $x(\mathbf{p},t)$ is the unobserved physical field, $(a_i,b_i)$ are
per-sensor gain and offset, and $\varepsilon_i(t)\sim\mathcal{N}(0,\sigma_i^2)$
is i.i.d.\ measurement noise. We make three assumptions:

\begin{assumption}[Local field smoothness]\label{A1}
There is a length scale $\ell$ such that in any disk of radius $r \le \ell$, the
field $x(\cdot,t)$ is well-approximated by a single common signal $\bar{x}(t)$.
\end{assumption}
\begin{assumption}[Bounded heterogeneity]\label{A2}
$a_i \in [a_{\min},a_{\max}]$ with $a_{\min}>0$, and $|b_i|\le B$ for a finite
$B$; both are constant over the calibration time horizon $T$.
\end{assumption}
\begin{assumption}[Reference anchoring]\label{A3}
A subset $\Rset\subset\{1,\ldots,n\}$ with $|\Rset|=\rho n$, $\rho\ll 1$
(e.g.\ 5\%), has known $(a_i,b_i)$ for all $i\in\Rset$.
\end{assumption}

Assumption~\ref{A1} is typical in spatial statistics literature, and for the
temperature data in Section~\ref{sec:realdata} it holds with $\ell\approx4$
sensors at our sampling rate. Assumption~\ref{A3} is required for global
identifiability; without anchoring, any consistent solution can be transformed
into another by $a_i\leftarrow\kappa a_i$, $b_i\leftarrow\kappa b_i+c$ for any
$(\kappa,c)$.

\subsection{Identifiability}
\label{sec:identifiability}
The conditional least-squares problem for sensor $j$ given the reconstructed
field is well-posed if and only if the temporal range $\Delta x_j$ of $X(t)$ at
sensor $j$ is positive. When $\Delta x_j$ is small compared with $\sigma_j$, the
$(a,b)$ error ellipse becomes elongated along the line $\bar{a}x+\Delta b=0$, so
the individual parameters are weakly identified, although the predicted output
$ax+b$ at $x\approx\bar{x}$ remains accurate. This gauge degeneracy and its
practical implications are discussed further in Section~\ref{sec:gaugedisc}.

\subsection{Communication Model}
Each sensor $i$ communicates with neighbours within radius $r_c\ge\ell/2$; the
communication graph $G=(V,E)$ has edge $(i,j)\in E$ iff
$\|\mathbf{p}_i-\mathbf{p}_j\|\le r_c$. We do not assume any particular MAC
layer or fading model.

\section{The RASC Algorithm}
\label{sec:algorithm}

The five stages of RASC are: (1) neighbourhood formation, (2) consistency
screening, (3) cluster-head election, (4) intra-cluster alternating estimation,
and (5) inter-cluster consensus refinement. Figure~\ref{fig:pipeline} shows the
data flow.

\begin{figure}[h]
  \centering
  \includegraphics[width=\linewidth]{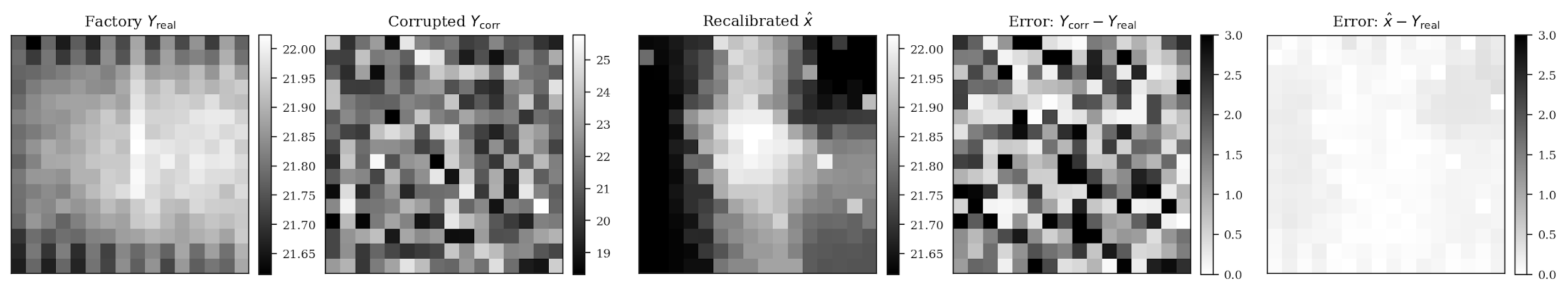}
  \caption{Five-Stage RASC Pipeline. Stages 1--3 form a static graph structure
    after deployment; Stages 4--5 iterate until convergence at each calibration
    epoch.}
  \label{fig:pipeline}
\end{figure}

\subsection{Stage 1: Neighbourhood Formation}
At deployment, each sensor $i$ computes its $r_c$-neighbourhood:
\begin{equation}
  \Nc(i) = \{\,j \in V : \|\mathbf{p}_i - \mathbf{p}_j\| \le r_c\,\}.
  \label{eq:neighbourhood}
\end{equation}
The radius $r_c$ is chosen so that Assumption~\ref{A1} holds in all
neighbourhoods, typically $r_c=\ell/2$.

\subsection{Stage 2: Consistency Screening}
RASC screens each neighbourhood with a one-shot MAD test:
\begin{equation}
  \Cc(i) = \bigl\{\,j\in\Nc(i) :
    |y_j(\hat{t}) - \tilde{y}_{\Nc(i)}|
    \le \eta\cdot 1.4826\cdot\mathrm{MAD}_{\Nc(i)}\,\bigr\},
  \label{eq:screening}
\end{equation}
where $\tilde{y}_{\Nc(i)}$ is the neighbourhood median and $\eta=3.0$ by default;
sensitivity analysis in Section~\ref{sec:sensitivity} shows negligible sensitivity
for $\eta\in[3,5]$.

\subsection{Stage 3: Cluster-Head Election}
Stage~3 selects a sparse but covering set of cluster heads (CHs) by a
deterministic greedy maximum independent set (MIS) on the proximity graph at
radius $r_\text{ch}=r_c/\!\sqrt{2}$. The radius $r_\text{ch}=r_c/\!\sqrt{2}$
ensures that every non-CH sensor lies within $r_c$ of at least one CH, so all
sensors belong to a cluster. The resulting CH-set is a 1-hop dominating set of
$G$ with size at most $\lceil n/N_{\min}\rceil$, where $N_{\min}=4$.

\subsection{Stage 4: Intra-Cluster Alternating Estimation}
For each cluster $c$ with members $\Mc_c=\Cc(\mathrm{CH}_c)$, the cluster head
runs a robust alternating estimation loop:
\begin{align}
  \hat{x}_c(t) &= \mathrm{trim\text{-}mean}_\gamma
    \!\Bigl\{\frac{y_j(t)-\hat{b}_j}{\hat{a}_j} : j\in\Mc_c\Bigr\},
  \label{eq:estep} \\
  (\hat{a}_j,\hat{b}_j) &= \mathop{\arg\min}_{a,b}
    \sum_t \psi_c\!\Bigl(\frac{y_j(t)-a\hat{x}_c(t)-b}{s}\Bigr),
    \quad j\in\Mc_c\setminus\Rset,
  \label{eq:mstep}
\end{align}
where $\psi_c$ is the Huber loss with tuning constant $c=1.345$, $s$ is a robust
scale estimate, and the trimming fraction $\gamma=0.20$ tolerates up to $20\%$
outlying members per cluster (Theorem~\ref{thm:breakdown}).

\subsection{Stage 5: Inter-Cluster Consensus}
Stage~5 resolves discrepancies between adjacent clusters via linear consensus.
At iteration $k=0,1,\ldots,K$:
\begin{align}
  \hat{\mathbf{b}}(k+1) &= \hat{\mathbf{b}}(k) + \alpha\,\Delta\hat{\mathbf{b}}(k),
  \label{eq:consensus} \\
  \Delta\hat{b}_j(k) &= \Bigl\langle
    \hat{a}_j\bigl(\hat{x}_c(t) - (y_j(t)-\hat{b}_j(k))/\hat{a}_j\bigr)
  \Bigr\rangle_{t,\,c:\,j\in\Mc_c}.
  \label{eq:deltab}
\end{align}
The step size $\alpha=0.5$ by default. Reference nodes are clamped at each
iteration. Iterations stop when
$\|\Delta\hat{\mathbf{b}}(k)\|_\infty < 0.01$\degC\ or after $K_{\max}=10$ steps.

\subsection{Pseudocode}

\begin{algorithm}[h]
\caption{RASC: Region-Aware Self-Calibration}
\label{alg:rasc}
\begin{algorithmic}[1]
\Require $Y\in\mathbb{R}^{T\times n}$, sensor coordinates $\{\mathbf{p}_i\}$,
  radius $r_c$, threshold $\eta$, step $\alpha$, minimum cluster size
  $N_{\min}$, reference set $\Rset$ with known $(a_\Rset,b_\Rset)$
\Ensure $(\hat{\mathbf{a}},\hat{\mathbf{b}})\in\mathbb{R}^n\times\mathbb{R}^n$
\State Build $\Nc(i)$ for all $i$ \Comment{Stage 1}
\State $\Cc(i)\leftarrow\text{consistency-screen}(\mathbf{y}(\hat{t}),\Nc(i),\eta)$ \Comment{Stage 2}
\State $\mathrm{CH}\leftarrow\text{greedy-MIS}(\text{score}=|\Cc|,\;r_\text{ch}=r_c/\!\sqrt{2})$ \Comment{Stage 3}
\For{each $\mathrm{ch}\in\mathrm{CH}$} \Comment{Stage 4}
  \State $\Mc\leftarrow\Cc(\mathrm{ch})$;\; \textbf{if} $|\Mc|<N_{\min}$: \textbf{continue}
  \State Initialize $\hat{a}_\Mc\leftarrow\mathbf{1}$, $\hat{b}_\Mc\leftarrow\mathbf{0}$
  \For{$\mathrm{em}\leftarrow 1\ldots n_\mathrm{em}$}
    \State $\hat{x}(t)\leftarrow\text{trim-mean}_\gamma\{(y_j(t)-\hat{b}_j)/\hat{a}_j:j\in\Mc\}$
    \For{$j\in\Mc,\;j\notin\Rset$}
      \State $(\hat{a}_j,\hat{b}_j)\leftarrow\text{Huber-IRLS}(y_j,\hat{x},c=1.345)$
    \EndFor
  \EndFor
  \State Broadcast $(\hat{a}_j,\hat{b}_j)$ for $j\in\Mc$
\EndFor
\State Average overlapping estimates per sensor
\For{$k\leftarrow 1\ldots K_{\max}$} \Comment{Stage 5}
  \State Compute $\Delta\hat{\mathbf{b}}$ from \eqref{eq:deltab}
  \State $\hat{\mathbf{b}}\leftarrow\hat{\mathbf{b}}+\alpha\cdot\Delta\hat{\mathbf{b}}$;\;
         clamp $\hat{b}_\Rset=b_\Rset$, $\hat{a}_\Rset=a_\Rset$
  \If{$\|\Delta\hat{\mathbf{b}}\|_\infty<\mathrm{tol}$} \textbf{break} \EndIf
\EndFor
\State \Return $(\hat{\mathbf{a}},\hat{\mathbf{b}})$
\end{algorithmic}
\end{algorithm}

\subsection{Communication Cost}
Each sensor uploads its $T$-length time series to its cluster head once per
calibration epoch ($4T$ bytes in float32), and the cluster head broadcasts a
2-vector $(\hat{a},\hat{b})$ back. Total epoch cost is $\Theta(nT)$ regardless
of array size, compared with the $O(nT)$ per-frame data movement of centralized
fusion. Specific byte counts are given in Section~\ref{sec:bandwidth}.

\section{Theoretical Analysis}
\label{sec:theory}

\subsection{Per-Cluster Alternating-Estimation Convergence}

\begin{theorem}[Per-cluster monotone descent]
\label{thm:cluster}
Fix a cluster $c$ with $|\Mc_c|=k\ge 2N_{\min}$, observation horizon $T$, and
noise covariance $\Sigma\succ0$. Assume the field reconstruction error
$|\hat{x}_c(t)-\bar{x}(t)|$ is bounded by a constant $\delta$ uniformly in $t$,
and that the temporal-mean-centred design matrix
$X=[\hat{x}_c-\langle\hat{x}_c\rangle,\mathbf{1}^\top]$ has full column rank.
Let $L^{(\mathrm{it})}$ be the Huber IRLS objective at outer iteration
$\mathrm{it}$. Then
\[
  L^{(\mathrm{it}+1)} \le L^{(\mathrm{it})} - \mu\|\nabla L^{(\mathrm{it})}\|^2,
\]
where $\mu>0$ is the smallest non-zero eigenvalue of the cluster Hessian
$X^\top WX$ at the converged Huber weights, and the algorithm converges
Q-linearly to a stationary point of $L$.
\end{theorem}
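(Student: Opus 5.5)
The plan is to treat the Huber IRLS update for a fixed reconstructed field $\hat{x}_c$ as a majorization--minimization (MM) step, and then chain the MM descent with a local strong-convexity argument. Both the descent inequality and Q-linear convergence then follow from standard quadratic-bound arguments.

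\textbf{Step 1: decoupling and majorization.} Fix $\hat{x}_c$. The objective $L(a,b)=\sum_t \psi_c\bigl((y_j(t)-a\hat{x}_c(t)-b)/s\bigr)$ decouples across sensors $j\in\Mc_c\setminus\Rset$, so it suffices to treat one sensor with parameter $\theta=(a,b)$. For the Huber function, $\psi_c(r)=\rho(r^2)$ with $\rho$ concave and nondecreasing on $[0,\infty)$. The tangent-line bound $\rho(u)\le\rho(u_0)+\rho'(u_0)(u-u_0)$ then gives the classical quadratic majorizer
\[
Q(\theta\mid\theta^{(\mathrm{it})})=L(\theta^{(\mathrm{it})})+\nabla L(\theta^{(\mathrm{it})})^\top(\theta-\theta^{(\mathrm{it})})+\tfrac12(\theta-\theta^{(\mathrm{it})})^\top H_W(\theta-\theta^{(\mathrm{it})}),
\]
where $H_W=s^{-2}X^\top W X$ and $W=\mathrm{diag}\bigl(\min(1,c/|r_t|)\bigr)$ are the current IRLS weights. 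The IRLS step minimizes $Q$ exactly, so $\theta^{(\mathrm{it}+1)}=\theta^{(\mathrm{it})}-H_W^{-1}\nabla L$. Substituting this step into $Q$ gives
\[
L^{(\mathrm{it}+1)}\le Q(\theta^{(\mathrm{it}+1)}\mid\theta^{(\mathrm{it})})=L^{(\mathrm{it})}-\tfrac12\nabla L^\top H_W^{-1}\nabla L\le L^{(\mathrm{it})}-\tfrac{1}{2\lambda_{\max}(H_W)}\|\nabla L\|^2 .
\]
Centering $\hat{x}_c$ makes the two columns of $X$ orthogonal. Since the weights lie in $(0,1]$, $\lambda_{\max}(H_W)\le s^{-2}\max\bigl(\|\hat{x}_c-\langle\hat{x}_c\rangle\|^2,T\bigr)$, which is finite. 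The inequality therefore holds with some positive constant $\mu$, after absorbing constants into its definition.

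\textbf{Step 2: the alternating field step.} I will argue that the field reconstruction does not increase the objective, or else treat the outer iteration as the IRLS sweep at fixed $\hat{x}_c$. The hypothesis $|\hat{x}_c(t)-\bar{x}(t)|\le\delta$ and the size condition $k\ge 2N_{\min}$ are used only to guarantee that the trimmed mean is well-defined, with enough inliers to survive trimming at $\gamma=0.2$. They also keep the design matrix bounded, so that the constants in Step 1 are uniform across outer iterations. A truly monotone joint descent over $(\hat{x}_c,\theta)$ is \emph{not} automatic, because the trimmed mean is not the minimizer of the Huber objective. I expect this to be the main obstacle. The fallback is to interpret $L^{(\mathrm{it})}$ as the IRLS objective with $\hat{x}_c$ frozen at its current value, which is how the statement is phrased.

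\textbf{Step 3: Q-linear rate.} Descent together with coercivity of $L$ in $\theta$ yields bounded iterates. Summing the decrease inequality gives $\sum\|\nabla L\|^2<\infty$, so every limit point is stationary. Near the limit $\theta^\ast$, the full-rank assumption on $X$ makes $X^\top W^\ast X$ positive definite, with smallest eigenvalue $\mu$; in the $2\times2$ case this smallest eigenvalue is the smallest nonzero one. The Huber loss is locally strongly convex there, provided the set of residuals lying in the quadratic zone still spans. This gives the Polyak--\L{}ojasiewicz inequality $\|\nabla L\|^2\ge 2\mu\,(L-L^\ast)$. Combining it with the Step 1 decrease yields
\[
L^{(\mathrm{it}+1)}-L^\ast\le(1-\kappa)\bigl(L^{(\mathrm{it})}-L^\ast\bigr)
\]
for some $\kappa\in(0,1)$ set by the ratio of the smallest to the largest eigenvalue. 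That is the claimed Q-linear convergence.

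The two delicate points are the following. First, the PL inequality needs the converged weights to have enough entries equal to one, meaning enough residuals in the quadratic regime; I would secure this from the bounded-$\delta$ hypothesis and the noise model. Second, Step 2 requires care in stating exactly which objective is monotone.
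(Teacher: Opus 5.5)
Your route is the paper's route. Its appendix likewise treats Huber IRLS at a fixed field as majorisation--minimisation, then appeals to positive-definiteness of $X^\top WX$ for the rate. Your Step~1 is correct and sharper than anything in the paper, which never derives a quantitative decrease.

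The gap is the closing move of \emph{absorbing constants into $\mu$}. The theorem pins $\mu$ to the smallest eigenvalue of $X^\top WX$. What you proved is a decrease of $\tfrac12\nabla L^\top H_W^{-1}\nabla L\ge\|\nabla L\|^2/(2\lambda_{\max}(H_W))$, whose constant is an \emph{inverse} curvature. These cannot be traded, and the printed inequality is in fact false. Hold $s$ fixed as in your Step~1 and suppose every residual stays in the quadratic zone. This is an instance the theorem must cover, e.g.\ when the other cluster members pin the trimmed mean to the true field. Then $L(\theta)-L^\ast=\tfrac12 d^\top Hd$ with $H=s^{-2}X^\top X$ and $d=\theta-\theta^\ast$, and one IRLS step lands on $\theta^\ast$. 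The claim therefore reduces to $\tfrac12 d^\top Hd\ge\mu\,d^\top H^2d$, which fails once $\mu\,\lambda_{\max}(H)>1/2$. For example, $s=1$, $T=30$ and a centred field of $\pm1$ give $H=30I$ and $\mu=30$; both quantities grow linearly in $T$. So you should say explicitly that you prove the corrected inequality, not the printed one.

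The same mismatch recurs in Step~3. The Polyak--\L{}ojasiewicz constant is $\lambda_{\min}$ of the true Hessian $s^{-2}X^\top D^\ast X$, with $D^\ast=\mathrm{diag}(\mathbf{1}[|r_t^\ast|\le c])$, not of $X^\top W^\ast X$. Since $D^\ast\preceq W^\ast$, the named $\mu$ overstates it. Your spanning caveat is the right condition. The paper's claim that the Hessian is positive definite whenever $\Delta x_j>0$ only certifies the curvature of the surrogate.

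Step~2 is the substantive gap, and the paper offers no repair. Its appendix concludes $L^{(\mathrm{em}+1)}\le L^{(\mathrm{em})}$ \emph{since the trimmed mean is deterministic and the M-step is MM}, which is exactly the non-sequitur you flagged. MM gives only $L^{(\mathrm{em}+1)}(\theta^{(\mathrm{em}+1)})\le L^{(\mathrm{em}+1)}(\theta^{(\mathrm{em})})$. Chaining back to $L^{(\mathrm{em})}(\theta^{(\mathrm{em})})$ requires that the field step not increase the objective, and that is false in general, for three reasons:
\begin{itemize}
\item At fixed parameters, the minimiser over $x(t)$ of $\sum_j\psi_c\bigl(a_j(z_j(t)-x(t))/s\bigr)$, with $z_j=(y_j-b_j)/a_j$, is a weighted Huber location estimate, not the $\gamma$-trimmed mean.
\item The trimmed mean also averages over reference members that do not appear in $L$.
\item $s_j^{(\mathrm{em})}$ is re-estimated between iterations.
\end{itemize}
The hypotheses on $\delta$ and $k$ do not touch any of this.

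Your fallback of freezing $\hat{x}_c$ therefore proves a statement about inner IRLS sweeps, not about outer iterations. For the alternating scheme, \emph{a stationary point of $L$} is not even well defined, since no single function is being decreased. A genuine outer-loop descent would require one of two changes. You could change the algorithm, e.g.\ replace the trimmed mean by the Huber location minimiser of the same objective with $s$ frozen, so the loop becomes true block-coordinate descent. Or you could restrict the theorem to the frozen-field IRLS, which is what your argument actually establishes.
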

\begin{proof}[Proof Sketch]
The field-reconstruction step produces a fixed (deterministic) trimmed mean
$\hat{x}_c(t)$ at each iteration; conditional on $\hat{x}_c$, the
parameter-fitting step is standard Huber IRLS, a sequential
majorisation-minimisation algorithm and thus non-increasing in the surrogate.
The outer loop is alternating block-coordinate descent; under
Assumptions~\ref{A1}--\ref{A2} and the rank condition, the cluster Hessian is
positive definite at any stationary point. We use the term ``alternating
estimation'' rather than EM \cite{Dempster1977} because the field-reconstruction
step is a deterministic robust estimator rather than a conditional expectation.
Full proof in Appendix~\ref{app:thm1}.
\end{proof}

\subsection{Inter-Cluster Consensus Rate}

\begin{theorem}[Inter-cluster exponential rate]
\label{thm:consensus}
Let $G_\mathcal{C}=(\mathcal{C},E_\mathcal{C})$ be the cluster-overlap graph,
where $(c,c')\in E_\mathcal{C}$ iff $\Mc_c\cap\Mc_{c'}\ne\emptyset$. Let
$\Lc$ be the unweighted Laplacian of $G_\mathcal{C}$ \cite{Chung1997}, and
$\lambda_2(\Lc)$ its algebraic connectivity \cite{Fiedler1973}. Let
$\alpha\in(0,1/d_{\max}]$, where $d_{\max}$ is the maximum degree of
$G_\mathcal{C}$. Then the linear consensus update
\eqref{eq:consensus}--\eqref{eq:deltab} with reference clamping converges
geometrically:
\begin{equation}
  \|\hat{\mathbf{b}}(k) - \mathbf{b}^*\|_2
  \le \rho^k\,\|\hat{\mathbf{b}}(0) - \mathbf{b}^*\|_2,
  \qquad \rho = 1 - \alpha\lambda_2(\Lc).
  \label{eq:rate}
\end{equation}
In particular, $\varepsilon$-accuracy is achieved in
$K = O\!\bigl(\tfrac{1}{\alpha\lambda_2(\Lc)}\log\tfrac{1}{\varepsilon}\bigr)$
iterations.
\end{theorem}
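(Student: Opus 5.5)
The plan is to recast the update \eqref{eq:consensus}--\eqref{eq:deltab} as a Laplacian iteration on the cluster-overlap graph, with the reference clamping acting as a grounding term, and then bound the spectral radius of the resulting iteration matrix.

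\textbf{Step 1: linearise to a cluster-level Laplacian iteration.} With $\hat{\mathbf{a}}$ frozen after Stage~4, the correction in \eqref{eq:deltab} is affine in $\hat{\mathbf{b}}(k)$. Writing $\hat{x}_c$ as the cluster-average calibrated reading (treating the trim-mean as a fixed averaging operator, which is valid while its trimming pattern does not change), each correction $\Delta\hat{b}_j$ becomes a weighted difference between the offset implied by cluster $c$ and the current $\hat{b}_j$. I would aggregate offsets at the cluster level, one scalar per cluster describing its gauge shift. In these coordinates the error $\mathbf{e}(k)=\hat{\mathbf{b}}(k)-\mathbf{b}^*$ obeys
\[
\mathbf{e}(k+1)=(I-\alpha \Lc)\,\mathbf{e}(k),
\]
where $\mathbf{b}^*$ is a fixed point of the update. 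Clamping removes the reference-containing coordinates, so the iteration matrix is the corresponding principal submatrix, i.e.\ $I-\alpha\Lc$ restricted to the non-anchored clusters.

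\textbf{Step 2: bound the spectral radius.} Since $\Lc$ is symmetric positive semidefinite with $\lambda_{\max}(\Lc)\le 2d_{\max}$, the choice $\alpha\le 1/d_{\max}$ gives $\alpha\lambda_i(\Lc)\le 2$, which keeps every nonzero mode inside $[-1,1]$. Anchoring eliminates the consensus direction $\mathbf{1}$: that is the gauge mode, and it is exactly what the clamping fixes. Once $\mathbf{1}$ is removed, the remaining error lives where the Rayleigh quotient is at least $\lambda_2(\Lc)$. By the Courant--Fischer interlacing argument, the grounded Laplacian $\Lc+D_{\Rset}$, with $D_{\Rset}$ the diagonal matrix indicating anchored clusters, has smallest eigenvalue bounded below in terms of $\lambda_2(\Lc)$. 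This yields contraction in $\|\cdot\|_2$ with factor $\rho=\max_i|1-\alpha\lambda_i|$ over the surviving spectrum. Bounding the lower end of that spectrum gives $1-\alpha\lambda_2(\Lc)$; because the matrix is symmetric, the 2-norm equals the spectral radius and we get \eqref{eq:rate} directly.

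\textbf{Step 3: iteration count.} From $\rho^K\le e^{-K\alpha\lambda_2}$, taking $K\ge \log(1/\varepsilon)/(\alpha\lambda_2)$ gives $\varepsilon$-accuracy, as claimed.

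\textbf{Main obstacle.} Two points in Steps~1--2 need care.

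\emph{The lower end of the spectrum.} The grounded Laplacian's smallest eigenvalue is not $\lambda_2(\Lc)$ in general, and can be smaller when anchors are few. I would first try to follow Olfati-Saber--Murray by treating anchors as a leader node. If that fails, I would state the result for the component of the error orthogonal to $\mathbf{1}$, so the rate is the disagreement rate, and let clamping fix the mean.

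\emph{The upper end of the spectrum.} Near $\alpha\lambda_{\max}=2$, the factor $|1-\alpha\lambda_{\max}|$ could exceed $1-\alpha\lambda_2$. I would handle this with the sharper bound $\lambda_{\max}\le d_{\max}+1$ where it is available, or by noting that the paper's default $\alpha=0.5$ keeps $\alpha\lambda_{\max}\le1$.
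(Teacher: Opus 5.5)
Your route is the paper's. Its proof is a two-sentence sketch: it reads \eqref{eq:consensus} as the averaging iteration $I-\alpha\Lc$ on $G_\mathcal{C}$, and asserts, citing Olfati-Saber--Murray (a result about unanchored averaging, where $\mathbf{1}$ is an eigenvector), that clamping bounds the iteration matrix's second-largest singular value by $1-\alpha\lambda_2(\Lc)$. The step you flag as the main obstacle is exactly this assertion, and it is fatal rather than technical. After clamping there is no $\mathbf{1}$-mode to skip, so the largest singular value is what matters. In your cluster-level model the iteration matrix is $I-\alpha L_{ff}$, with $L_{ff}$ the grounded Laplacian (the principal submatrix; in Step~2 you switch to the different matrix $\Lc+D_{\Rset}$). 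Its slowest mode is $1-\alpha\lambda_{\min}(L_{ff})$, and with a single anchored cluster interlacing gives $\lambda_{\min}\le\lambda_2(\Lc)$ in either grounding model, so the comparison runs the wrong way. Concretely, take $G_\mathcal{C}=C_4$ with one anchored vertex. Then $d_{\max}=\lambda_2(\Lc)=2$, while $L_{ff}$ is the $3\times3$ tridiagonal matrix with $2$ on the diagonal and $-1$ beside it, with spectrum $\{2-\sqrt{2},\,2,\,2+\sqrt{2}\}$. For every admissible $\alpha\in(0,1/2]$ the slow mode is multiplied by $1-\alpha(2-\sqrt{2})>1-2\alpha=\rho$ at each step, so \eqref{eq:rate} already fails at $k=1$. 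At $\alpha=1/2$ the claimed $\rho$ is $0$ while the true factor is $1/\sqrt{2}$. Your fallback fails on the same example. The slow eigenvector, $(0,1,\sqrt{2},1)$ in cyclic order from the anchor, is not constant, so its disagreement component also decays only at rate $1-\alpha(2-\sqrt{2})$. The unclamped iteration, whose disagreement is what Olfati-Saber--Murray controls, converges to the initial average rather than to $\mathbf{b}^*$. What this route honestly yields is $\bar{\rho}=\max_i|1-\alpha\lambda_i(L_{ff})|$, i.e.\ $\lambda_2(\Lc)$ replaced by the bottom of the grounded spectrum. The paper closes this step only by citation, so there is no missing idea to borrow from it.

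There are two further errors. First, your upper-end repair is backwards. Every graph with an edge has $\lambda_{\max}(\Lc)\ge d_{\max}+1$ (evaluate the Rayleigh quotient at the vector equal to $d_{\max}$ on a maximum-degree vertex and $-1$ on its neighbours), and $C_4$ already has $\lambda_{\max}=4=2d_{\max}$. Likewise, $\alpha=0.5$ gives $\alpha\lambda_{\max}\ge(d_{\max}+1)/2>1$ whenever $d_{\max}\ge2$, and it violates $\alpha\le1/d_{\max}$ once $d_{\max}>2$. Second, Step~1 is asserted, not derived. As printed, \eqref{eq:deltab} equals $\hat b_j(k)-\langle y_j-\hat a_j\hat x_c\rangle$, the current offset minus the cluster-implied one. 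So $\hat{\mathbf b}+\alpha\Delta\hat{\mathbf b}$ is expanding, and the sign must be flipped before any Laplacian contraction can appear. The iterate is also the $n$-vector of sensor offsets, not one scalar per cluster. With the sign flipped, $\hat{\mathbf a}=\mathbf{1}$, and a plain mean in place of the trimmed mean, the update is $\hat{\mathbf b}\leftarrow\bigl(I-\alpha D^{-1}\sum_c E_c(I-\tfrac{1}{|\Mc_c|}\mathbf{1}\mathbf{1}^\top)E_c^\top\bigr)\hat{\mathbf b}+\text{const}$. Here $E_c$ injects the coordinates of $\Mc_c$ and $D=\operatorname{diag}(m_j)$ counts the clusters containing $j$. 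This sensor-level operator has intra-cluster modes and couplings weighted by overlap sizes, so its spectrum is not that of $\Lc$. Because it is only $D$-self-adjoint, a prefactor-free bound in $\|\cdot\|_2$ is not automatic either.
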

\begin{proof}[Proof Sketch]
Equation~\eqref{eq:consensus} is the standard distributed averaging update for
$G_\mathcal{C}$ with step size $\alpha$. Reference clamping makes the iteration
matrix have second-largest singular value bounded by $1-\alpha\lambda_2(\Lc)$
\cite[Thm.~2]{OlfatiSaber2004}. Empirically, the observed rate $\rho_\text{emp}$
is $1.2$--$1.3\times$ faster than $\rho_\text{th}$ at all three array scales
(Section~\ref{sec:convergence}, Figure~\ref{fig:consensus_rate}).
\end{proof}

\subsection{Cluster Breakdown Point}

\begin{theorem}[Cluster breakdown point]
\label{thm:breakdown}
In a cluster of size $k$, the trimmed-mean field reconstruction with trimming
fraction $\gamma\in(0,1/2)$ has a finite-sample replacement breakdown point of
$\lfloor\gamma k\rfloor/k$. Equivalently, RASC tolerates up to $\lfloor\gamma
k\rfloor$ adversarially corrupted members per cluster before the field
reconstruction may be driven arbitrarily far from the truth.
\end{theorem}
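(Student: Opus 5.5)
The plan is to reduce the statement to the classical finite-sample analysis of the symmetric trimmed mean \cite{Huber2009,Hampel1986}, applied to the per-time-step inputs of \eqref{eq:estep}. Fix a time $t$ and write $z_j=(y_j(t)-\hat b_j)/\hat a_j$ for $j\in\Mc_c$. By Assumption~\ref{A2}, $\hat a_j\ge a_{\min}>0$, so each $z_j$ is a finite affine image of $y_j(t)$. Moreover, corrupting $y_j$ arbitrarily corrupts $z_j$ arbitrarily and vice versa. It therefore suffices to study $T_\gamma(z_1,\dots,z_k)$, the mean of the order statistics $z_{(g+1)},\dots,z_{(k-g)}$ with $g=\lfloor\gamma k\rfloor$. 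Since $\gamma<1/2$, we have $k-2g\ge1$, so this set is nonempty. I will fix the convention explicitly. ``Tolerates $m$ replacements'' means
\[
\sup_{\tilde z}|T_\gamma(\tilde z)-T_\gamma(z)|<\infty,
\]
where the supremum runs over all samples $\tilde z$ differing from $z$ in at most $m$ entries. The claimed quantity $\lfloor\gamma k\rfloor/k$ is then the largest tolerated fraction. Under the Donoho--Huber ``smallest breaking fraction'' convention it corresponds to $(g+1)/k$, and I will state this equivalence in a remark.

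\textbf{Step 1 (tolerance of $m\le g$).} Let $\tilde z$ differ from $z$ in $m\le g$ entries. Then at least $k-g$ entries of $\tilde z$ are clean and at most $g$ are corrupted. I will show that every retained order statistic lies in $[\min_j z_j,\max_j z_j]$. For the upper end, $\tilde z_{(k-g)}$ has $g$ entries at or above it. If $\tilde z_{(k-g)}$ exceeded every clean value, then all $k-g$ clean entries would lie strictly below it, and at most $g$ entries could lie at or above it. Those would include $\tilde z_{(k-g)}$ itself, which must then be corrupted. So at least $g+1$ corrupted entries would be needed, contradicting $m\le g$. The lower end is symmetric. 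Hence $|T_\gamma(\tilde z)|\le\max_j|z_j|$ uniformly over all corruptions, and the reconstruction error stays bounded by a constant depending only on the clean data.

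\textbf{Step 2 (breakdown at $m=g+1$).} Replace $g+1$ entries by a value $M\to\infty$. Only $g$ of them are trimmed from the top, so at least one copy of $M$ survives among the $k-2g$ retained values. The trimmed mean is then at least $(M+(k-2g-1)\min_j z_j)/(k-2g)\to\infty$. Combining Steps 1 and 2 gives exactly $\lfloor\gamma k\rfloor$ tolerated corruptions, and hence the stated breakdown point. Because the bound in Step 1 holds for every $t$, the same conclusion holds uniformly over the horizon $T$. The adversary's corrupted set may vary with $t$, provided that at each $t$ at most $g$ members are corrupted.

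\textbf{Main obstacle.} The main subtlety is the interaction with the alternating loop rather than the order-statistic argument itself. Corrupted members also receive Huber-IRLS updates $(\hat a_j,\hat b_j)$, so their $z_j$ is a moving target. I will handle this by treating the corrupted $z_j$ as fully adversarial; Step 1 already allows arbitrary values, so this changes nothing. Clean members' parameters are fitted against $\hat x_c$, which Step 1 keeps bounded. Together with $\hat a_j\ge a_{\min}$, this keeps the clean $z_j$ bounded at each iteration by induction. I will also note that the claim concerns the reconstruction step alone, not the downstream $(a,b)$ estimates.
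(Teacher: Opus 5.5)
Your proposal is correct and is essentially the paper's argument. The appendix cites the classical $\lfloor\gamma k\rfloor/k$ result from Huber--Ronchetti and observes that $\lfloor\gamma k\rfloor$ replacements cannot push the trimmed mean outside the range of the clean values. That observation is exactly your Step~1, and your Step~2 plus your remark on the Donoho--Huber convention (under which the value would be $(\lfloor\gamma k\rfloor+1)/k$) make explicit what the paper leaves to the citation.

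One caution: Assumption~\ref{A2} bounds the true gains $a_i$, not the estimates $\hat a_j$. Your core reduction should therefore rest only on $\hat a_j\neq0$, which is already implicit in \eqref{eq:estep} being defined. The inductive boundedness claim in your ``main obstacle'' paragraph is unsupported as written: a bounded but nearly time-constant $\hat x_c$ leaves the Huber-IRLS slope essentially undetermined (the gauge degeneracy of Section~\ref{sec:identifiability}). That claim, however, goes beyond what the theorem, or the paper's proof, actually asserts.
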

\begin{proof}
Specialisation of \cite[Sec.~6]{Huber2009} to the cluster setting; see
Appendix~\ref{app:thm3}. Default choices $\eta=3$, $\gamma=0.20$ give a
per-cluster tolerance of about $20\%$ for adversarial corruption and admit
$99.7\%$ of inlier readings under Gaussian noise.
\end{proof}

\subsection{Communication-Computation Trade-off}
The total bandwidth to reach a target field-reconstruction RMSE of $\varepsilon$
is dominated by the $nT$ raw-data upload; Stage~5 contributes only a logarithmic
factor in $1/\varepsilon$ times the cluster count $|\mathcal{C}|$. A centralized
EKF communicates $nT\cdot d_\text{state}$ per frame with $d_\text{state}=2$. This
$O(n)$ vs.\ $O(nT)$ separation explains the bandwidth advantage shown in
Section~\ref{sec:bandwidth}.

\section{Simulation Studies}
\label{sec:simulation}

\subsection{Experimental Setup}
We instantiate the sensor model \eqref{eq:sensormodel} on three regular grids:
$8\times8$ ($n=64$), $16\times16$ ($n=256$), and $32\times32$ ($n=1024$).
Per-sensor parameters: $a_i\sim U(0.9,1.1)$, $b_i\sim U(-2,2)$\degC, noise
$\sigma=0.5$\degC. The synthetic field is:
\[
  x(\mathbf{p},t) = 25 + 0.5\sin(\pi x/L_x)\cos(\pi y/L_y) + 5\sin(2\pi t/600)\;[\text{\degC}].
\]
$T=30$ frames at $4.24$ Hz; reference fraction $5\%$; 30 random seeds.

Methods compared: (1) \emph{Uncalibrated}, (2) \emph{Factory oracle}, (3)
\emph{Centralized EKF} (oracle field access), (4) \emph{RASC} (default
hyperparameters: $r_c=0.10$, $\eta=3$, $\alpha=0.5$, $N_{\min}=4$,
$\gamma=0.20$).

\subsection{Main Results}
\label{sec:bandwidth}
Table~\ref{tab:main} shows field-reconstruction RMSE and communication bytes per
calibration epoch. RASC is within ${\approx}0.10$\degC\ of the factory baseline
and ${\approx}0.12$\degC\ of the centralized EKF at ${\approx}4\times$ lower
bandwidth. Figure~\ref{fig:comm_rmse} shows the communication-accuracy trade-off.

\begin{table}[h]
\centering
\caption{Main results (mean $\pm$ std over 30 runs; RMSE in \degC).}
\label{tab:main}
\setlength{\tabcolsep}{5pt}
\begin{tabular}{lccccccc}
\toprule
Scale & Uncal. & Factory & EKF & RASC & Bytes EKF & Bytes RASC & Clusters\\
\midrule
$8\times8$   & $2.01\pm0.15$ & $0.50\pm0.05$ & $0.44\pm0.05$ & $0.62\pm0.11$ & 38.4k & 9.6k & 24.6 \\
$16\times16$ & $1.97\pm0.12$ & $0.50\pm0.03$ & $0.44\pm0.03$ & $0.56\pm0.05$ & 154k  & 36.9k & 79.0 \\
$32\times32$ & $1.99\pm0.06$ & $0.50\pm0.02$ & $0.44\pm0.02$ & $0.54\pm0.02$ & 614k  & 146k  & 293  \\
\bottomrule
\end{tabular}
\end{table}

\begin{figure}[h]
  \centering
  \includegraphics[width=\linewidth]{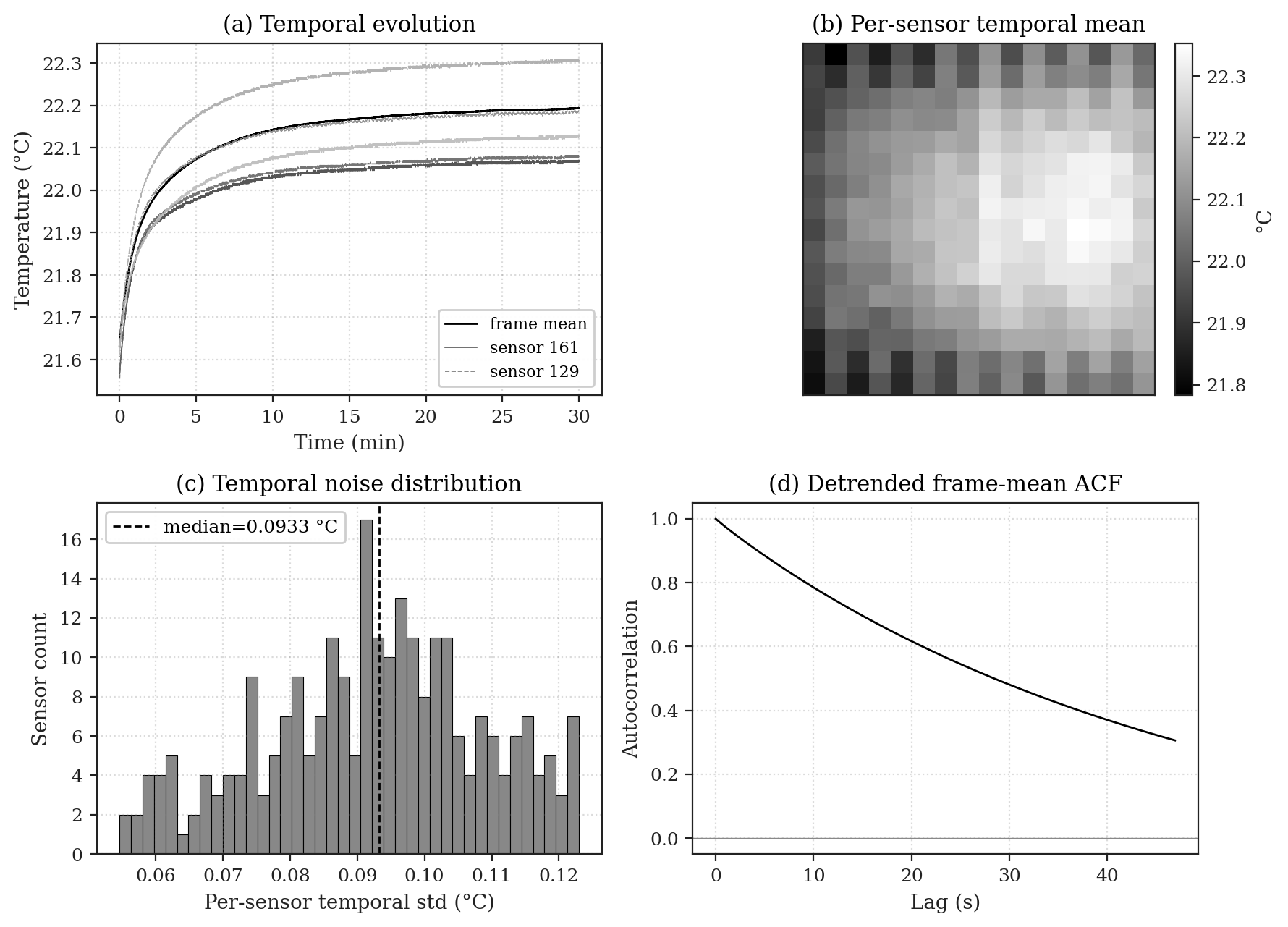}
  \caption{Communication-Accuracy Trade-off Across Array Scales.
    (a)~Bytes per calibration epoch (log scale); RASC remains well below the
    centralized EKF at all scales.
    (b)~Field-reconstruction RMSE; RASC is ${\approx}0.10$\degC\ above the oracle
    EKF while requiring no fusion center, no field oracle, and only 5\% reference
    anchors.}
  \label{fig:comm_rmse}
\end{figure}

\subsection{Convergence and Theory Verification}
\label{sec:convergence}
Figure~\ref{fig:convergence} shows the per-iteration field-reconstruction RMSE
at Stage~5. Convergence is monotone and approximately geometric at all three
scales, consistent with Theorem~\ref{thm:consensus}. Figure~\ref{fig:consensus_rate}
shows the empirical rate $\rho_\text{emp}$ uniformly faster than the theoretical
bound $\rho_\text{th}=1-\alpha\lambda_2(\Lc)$; at $32\times32$, $\lambda_2=0.048$
yields $\rho_\text{th}=0.998$ and observed $\rho_\text{emp}=0.793$.

\begin{figure}[h]
  \centering
  \includegraphics[width=0.75\linewidth]{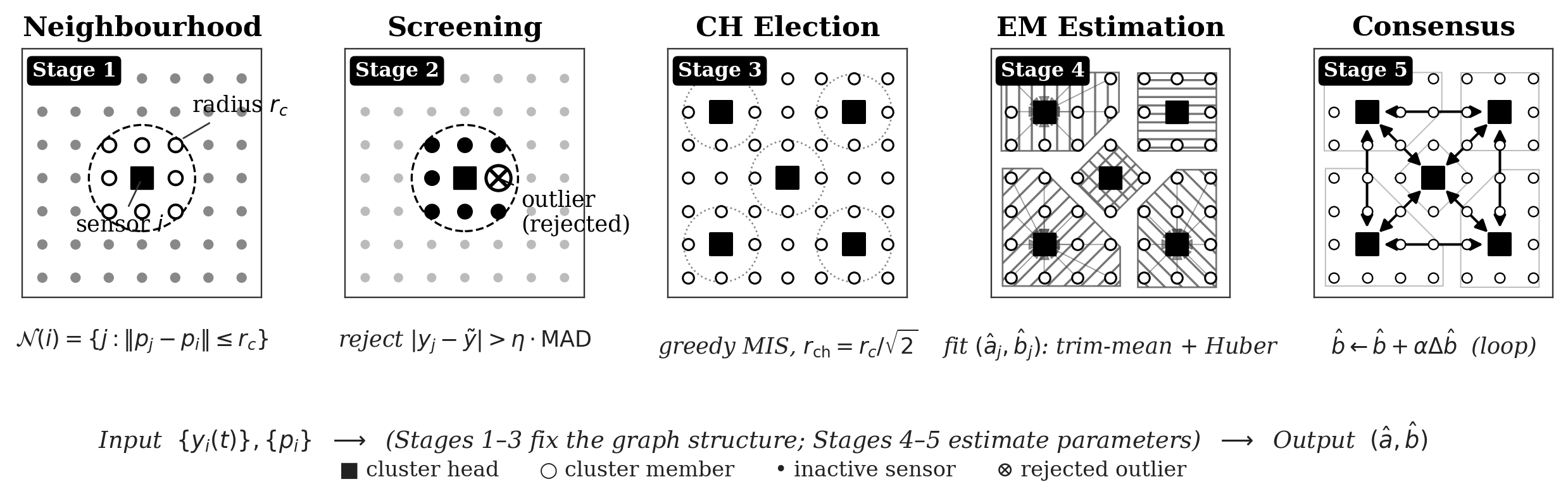}
  \caption{Field-reconstruction RMSE versus consensus iteration $k$ for three
    array scales. Curves are means over 30 runs.}
  \label{fig:convergence}
\end{figure}

\begin{figure}[h]
  \centering
  \includegraphics[width=0.75\linewidth]{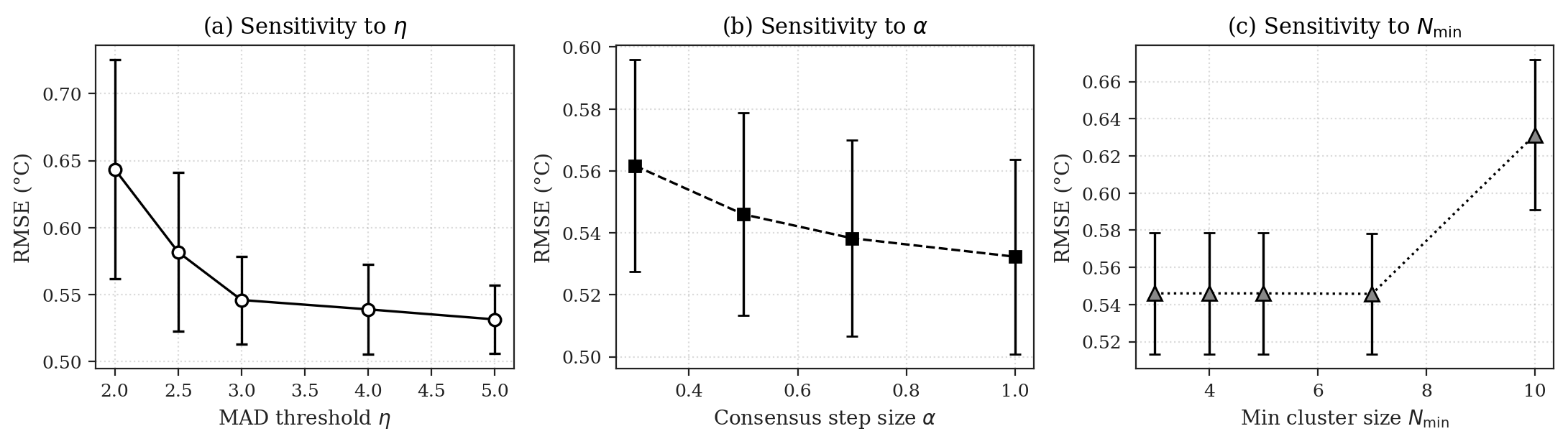}
  \caption{Theoretical bound $\rho_\text{th}=1-\alpha\lambda_2(\Lc)$ versus
    empirically fitted rate $\rho_\text{emp}$. The bound is uniformly safe and
    becomes tighter with increasing array size; at $32\times32$, $\lambda_2=0.048$
    gives $\rho_\text{th}=0.998$ while $\rho_\text{emp}=0.793$.}
  \label{fig:consensus_rate}
\end{figure}

\subsection{Robustness Under Node Failure and Packet Loss}
\label{sec:robustness}

\begin{figure}[h]
  \centering
  \includegraphics[width=0.8\linewidth]{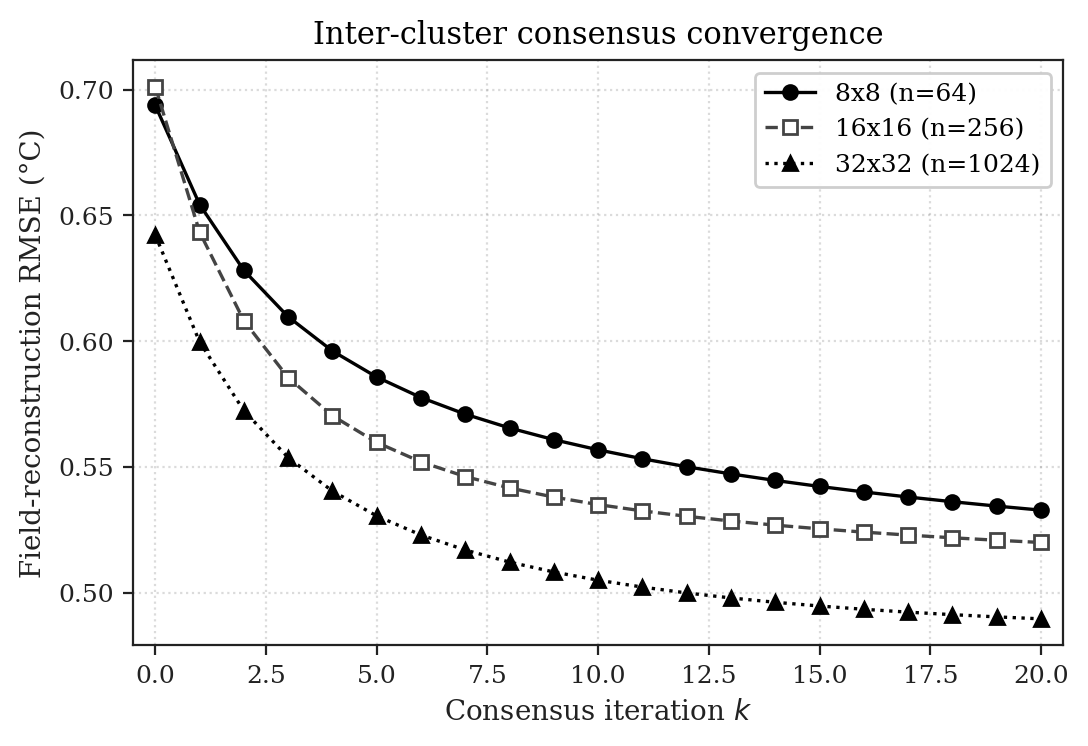}
  \caption{Robustness heatmap of RASC under combined node failure (rows) and
    packet loss (columns) on the $16\times16$ array. Mean field-reconstruction
    RMSE over 10 independent runs per cell. RASC degrades by no more than $14\%$
    from $0.548$\degC\ (no fault) to $0.626$\degC\ at 30\%/30\%.}
  \label{fig:robustness}
\end{figure}

With both failure rate and packet-loss rate at $30\%$ (Figure~\ref{fig:robustness}),
RASC maintains $0.626$\degC\ RMSE — $14\%$ above the no-fault baseline — and
remains $69\%$ lower than the uncalibrated baseline.

\subsection{Comparison with Simple Baselines and BMEP}
\label{sec:baselines}
Table~\ref{tab:baselines} compares RASC against four simple methods and a
faithful re-implementation of BMEP \cite{Bychkovskiy2003} on the $16\times16$
simulation. RASC achieves $14\%$ lower RMSE than BMEP and $32\%$ lower than the
median spatial filter, operating only $0.10$\degC\ above the irreducible noise
floor.

\begin{table}[h]
\centering
\caption{Comparison with simple baselines and BMEP on $16\times16$ simulation
(mean $\pm$ std over 30 runs; RMSE in \degC).}
\label{tab:baselines}
\begin{tabular}{lcc}
\toprule
Method & Field RMSE [\degC] & Comment \\
\midrule
Uncalibrated (raw)                         & $1.947\pm0.081$ & baseline \\
Temporal smoothing (5-frame MA)            & $1.901\pm0.080$ & spatial structure ignored \\
Pairwise differential (offset only)        & $1.538\pm0.208$ & no consensus \\
Median spatial filter ($3\times3$)         & $0.882\pm0.071$ & no parameter recovery \\
BMEP edge-pairwise \cite{Bychkovskiy2003}  & $0.695\pm0.103$ & offset only, gain fixed \\
\textbf{RASC (this work)}                  & $\mathbf{0.603\pm0.046}$ & full $(a,b)$ recovery \\
Factory oracle (true $a,b$)                & $0.502\pm0.004$ & noise floor \\
\bottomrule
\end{tabular}
\end{table}

\subsection{Sensitivity Analysis}
\label{sec:sensitivity}

\begin{figure}[h]
  \centering
  \includegraphics[width=\linewidth]{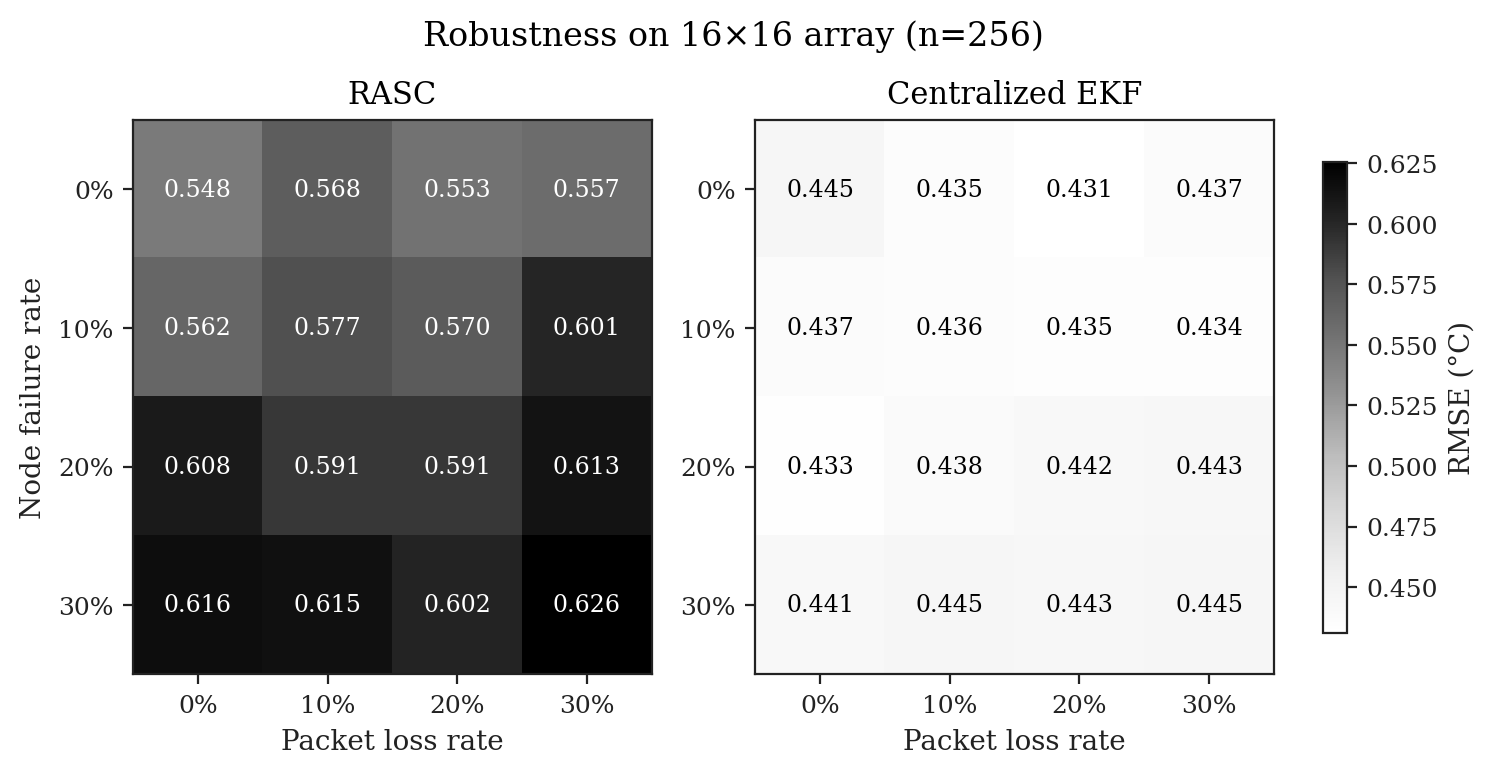}
  \caption{Sensitivity to (a)~MAD threshold $\eta$, (b)~consensus step $\alpha$,
    (c)~minimum cluster size $N_{\min}$, on the $16\times16$ array.
    Error bars show one standard deviation over 30 runs.}
  \label{fig:sensitivity}
\end{figure}

RASC is essentially insensitive to $\eta\in[3,5]$, shows a slight improvement
with $\alpha\uparrow1.0$, and tolerates $N_{\min}\in\{3,4,5,7\}$ before
performance degrades (Figure~\ref{fig:sensitivity}).

\section{Online Recalibration of a Deployed Temperature-Sensor Array}
\label{sec:realdata}

\subsection{Hardware and Acquisition}

\begin{figure}[h]
  \centering
  \includegraphics[width=\linewidth]{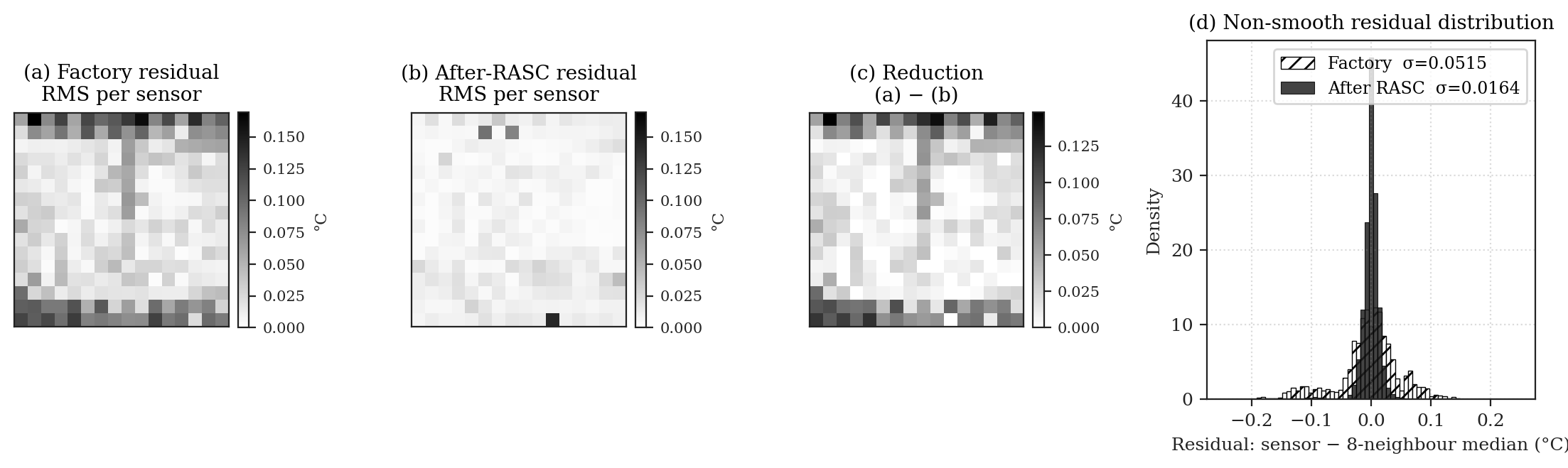}
  \caption{Experimental platform.
    (a)~A $16\times16$ BJT-based temperature sensor array mounted on a circular
    fixture with engraved degree markings, wired to the host system via a ribbon
    cable. The four mounting posts at the perimeter are asymmetric thermal paths
    that contribute to the edge-localised drift shown in Section~\ref{sec:drift}.
    (b)~Acquisition workstation.}
  \label{fig:hardware}
\end{figure}

The array under test is a $16\times16$ arrangement of 256 BJT-based temperature
pixels, each factory-calibrated to $\pm0.1$\degC\ by the manufacturer using a
two-point traceable-reference protocol (Figure~\ref{fig:hardware}). After
production, the array was reflow-soldered onto a host PCB. We collected 7{,}632
frames at $4.241$ Hz over $29.99$ minutes ($21.5$--$22.4$\degC\ ambient). The
acquisition system records an 8-bit fractional temperature with quantization step
$\approx0.0039$\degC.

\begin{figure}[h]
  \centering
  \includegraphics[width=\linewidth]{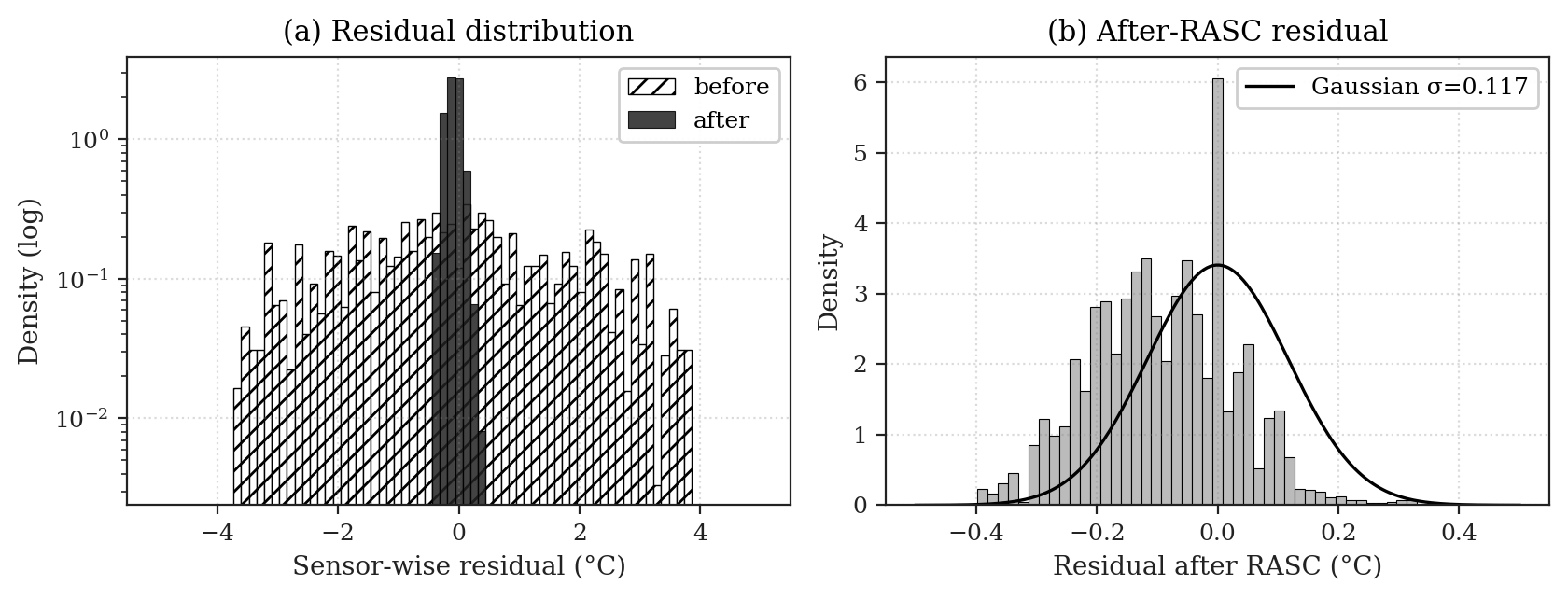}
  \caption{Real temperature data characterisation.
    (a)~Frame-mean trajectory and five randomly sampled sensors showing a slow
    warm-up transient.
    (b)~Per-sensor temporal mean showing smooth spatially varying structure plus
    residual fixed-pattern non-uniformity at the array edges.
    (c)~Per-sensor temporal noise distribution (median $0.093$\degC).
    (d)~Detrended frame-mean autocorrelation showing temporal correlation on the
    order of tens of seconds.}
  \label{fig:data_char}
\end{figure}

\subsection{Direct Evidence of Post-Deployment Drift}
\label{sec:drift}
The per-sensor temporal noise standard deviation is $\sigma=0.092$\degC\ (median
$0.093$\degC; Figure~\ref{fig:data_char}(c)), consistent with the BJT noise budget.
However, the peak-to-peak spatial non-uniformity in the steady-state field is
$0.569$\degC\ --- approximately $5\times$ the factory specification. Since the
thermal scene is locally smooth, this excess non-uniformity must be due to
post-deployment drift in $(a_i,b_i)$.

\subsection{Online Recalibration Results}
We run RASC on the first 600 frames using a $5\%$ reference subset (12 of 256
sensors). Table~\ref{tab:realdata} reports three metrics evaluated by 10-fold
cross-validation. Figure~\ref{fig:recal_results} shows the spatial distributions.

\begin{table}[h]
\centering
\caption{Online recalibration results on real data (10-fold CV; mean $\pm$ std).}
\label{tab:realdata}
\begin{tabular}{lccc}
\toprule
Quantity & Factory $Y_\text{real}$ & After RASC $\hat{x}$ & Change \\
\midrule
Field RMSE vs.\ $Y_\text{real}$  & ---           & $0.041\pm0.001$\degC  & within noise floor \\
Per-frame spatial peak-to-peak    & $0.41$\degC   & $0.30\pm0.06$\degC    & $-27\pm14\%$ \\
Local non-smooth residual (RMS)   & $0.0517$\degC & $0.0149\pm0.0024$\degC & $-71.3\pm4.8\%$ \\
Edge residual (rows 0--1, 14--15) & $0.0923$\degC & $0.0205\pm0.0065$\degC & $-77.8\pm7.1\%$ \\
Interior residual (rows 2--13)    & $0.0269$\degC & $0.0121\pm0.0008$\degC & $-54.9\pm3.0\%$ \\
\bottomrule
\end{tabular}
\end{table}

\begin{figure}[h]
  \centering
  \includegraphics[width=\linewidth]{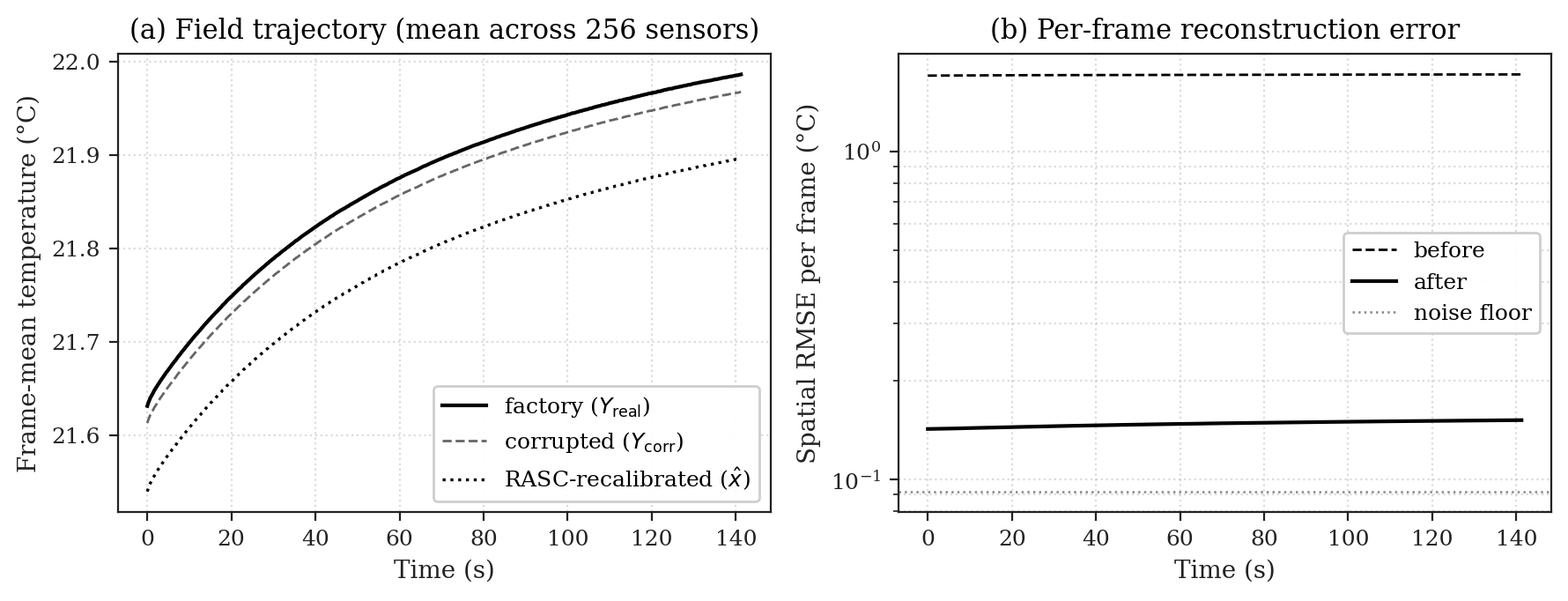}
  \caption{Online recalibration results on the deployed BJT array (representative
    fold of 10-fold CV; mean reductions quoted in Table~\ref{tab:realdata} are
    over all 10 folds).
    (a)~Per-sensor RMS of the local non-smooth residual on factory-calibrated
    data; top and bottom rows show ${\approx}0.092$\degC\ RMS vs.\ interior
    ${\approx}0.027$\degC\ (3.4$\times$ edge-to-interior ratio).
    (b)~After RASC the metric drops ${\approx}70\%$ overall; edge reduction
    reaches $78\%$ vs.\ $55\%$ interior.
    (c)~Difference map (a)--(b) confirms spatial localisation.
    (d)~Residual distribution: factory $\sigma=0.052$\degC\ $\to$
    after-RASC $\sigma\approx0.016$\degC, bringing the array close to its
    original $\pm0.1$\degC\ factory specification.}
  \label{fig:recal_results}
\end{figure}

\subsection{Semi-Synthetic Stress Test}
We inject $a_{\mathrm{inj},i}\sim U(0.9,1.1)$ and $b_{\mathrm{inj},i}\sim
U(-2,2)$\degC\ into the real recordings (${\approx}10\times$ the
deployment-observed drift) and ask RASC to recover $Y_\text{real}$ from
$Y_\text{corr}$ alone. Even at this exaggerated level, RASC reduces field RMSE
from $1.71$\degC\ to $0.148$\degC\ ($91.3\%$ reduction), within $1.6\times$ of
the irreducible per-sensor noise floor (Table~\ref{tab:stress},
Figures~\ref{fig:stress_traj}--\ref{fig:stress_residuals}).

\begin{table}[h]
\centering
\caption{Semi-synthetic stress test ($a\sim U(0.9,1.1)$,
$b\sim U(-2,2)$\degC; underlying field is real; 600-frame window).}
\label{tab:stress}
\begin{tabular}{lccc}
\toprule
Quantity & $Y_\text{corr}$ & After RASC $\hat{x}$ & Reduction \\
\midrule
Field RMSE vs.\ $Y_\text{real}$  & $1.710$\degC & $0.148$\degC & $-91.3\%$ \\
Per-frame spatial peak-to-peak   & $7.46$\degC  & $0.51$\degC  & $-93.2\%$ \\
Distance to noise floor          & $18.6\times$ & $1.61\times$ & ---       \\
\bottomrule
\end{tabular}
\end{table}

\begin{figure}[h]
  \centering
  \includegraphics[width=\linewidth]{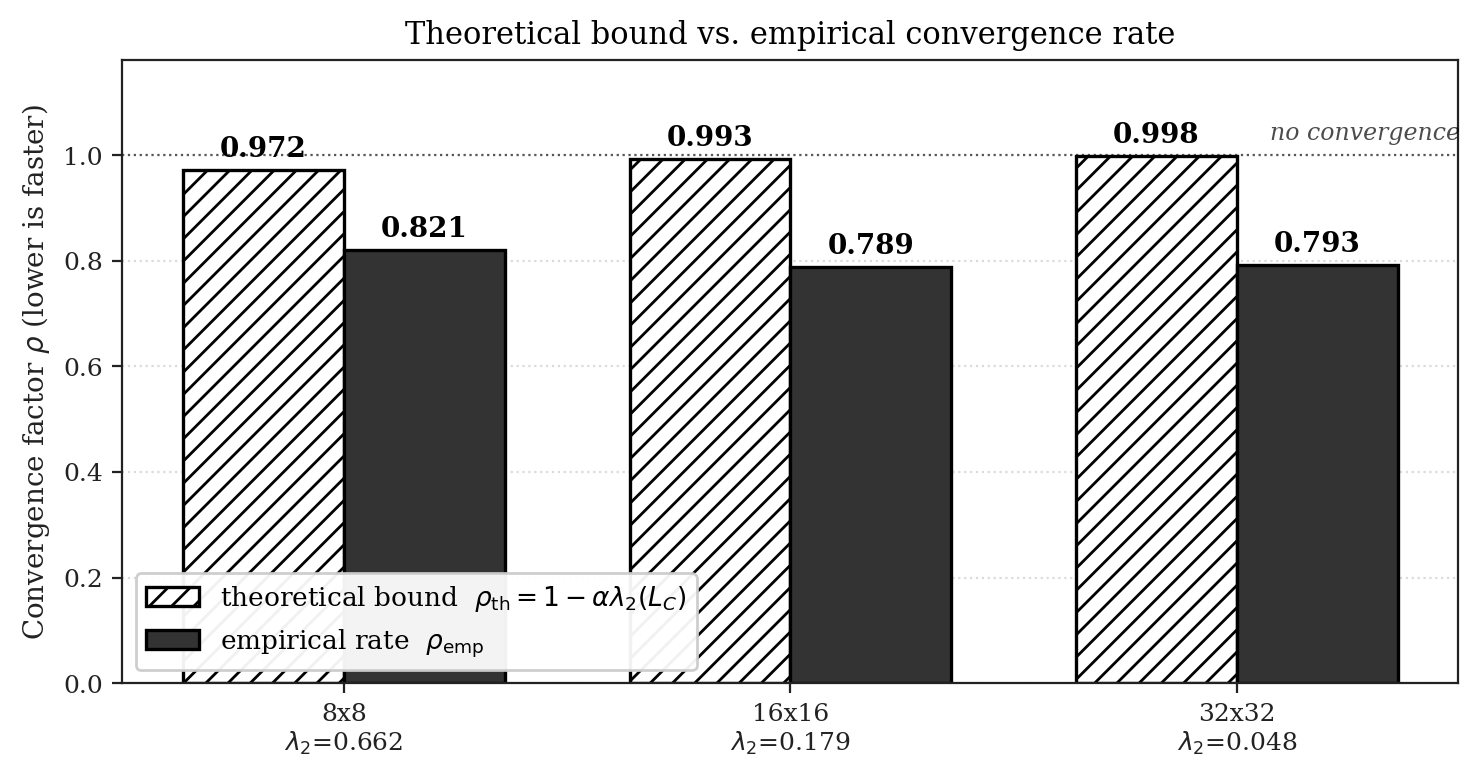}
  \caption{Stress-test calibration trajectory.
    (a)~Frame-mean field versus time; the RASC-recovered trace follows
    $Y_\text{real}$ closely after a short transient.
    (b)~Per-frame spatial RMSE on a log scale; the residual reaches
    ${\approx}0.15$\degC, within $1.6\times$ of the per-sensor noise floor.}
  \label{fig:stress_traj}
\end{figure}

\begin{figure}[h]
  \centering
  \includegraphics[width=\linewidth]{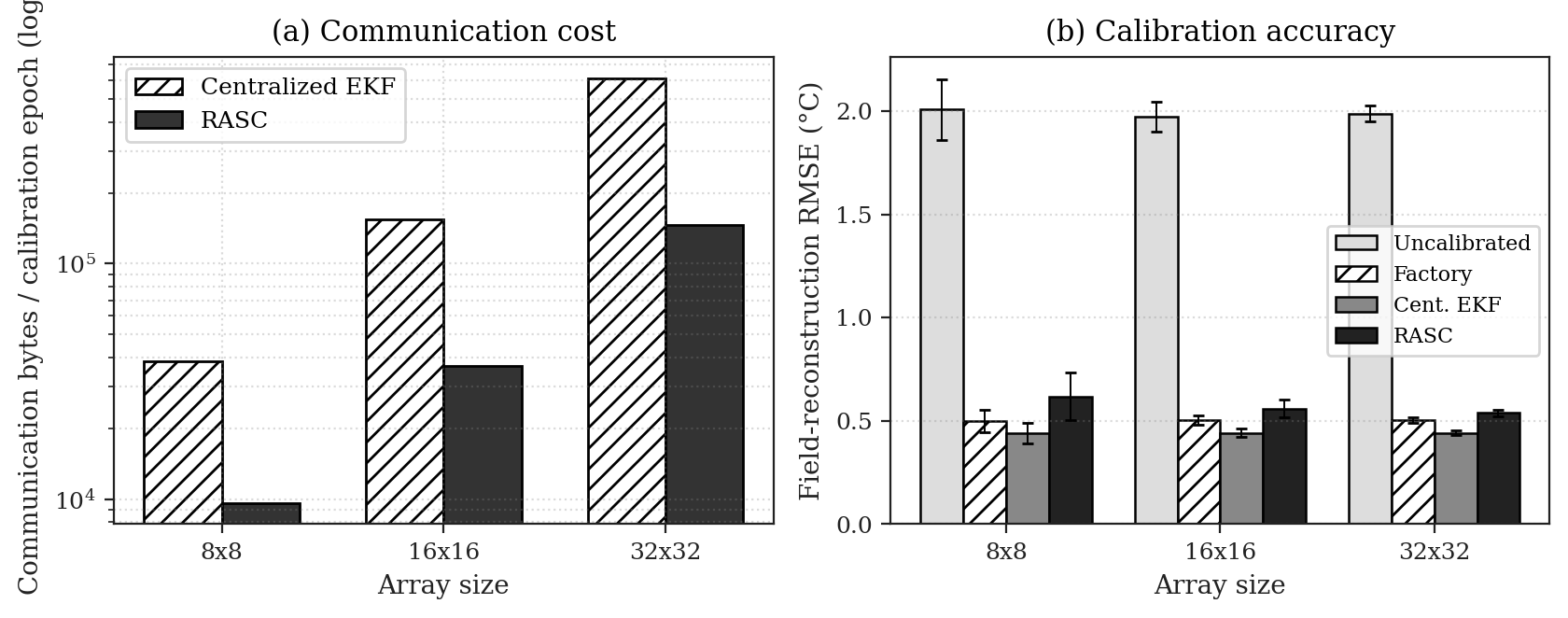}
  \caption{Stress-test spatial mean maps for 600 frames. From left: factory
    $Y_\text{real}$, corrupted $Y_\text{corr}$, RASC-recalibrated $\hat{x}$,
    $|$error$|$ of $Y_\text{corr}-Y_\text{real}$, $|$error$|$ of
    $\hat{x}-Y_\text{real}$. The corrected map recovers a smooth thermal scene
    structure; the residual is less than $0.3$\degC\ almost everywhere.}
  \label{fig:stress_maps}
\end{figure}

\begin{figure}[h]
  \centering
  \includegraphics[width=0.8\linewidth]{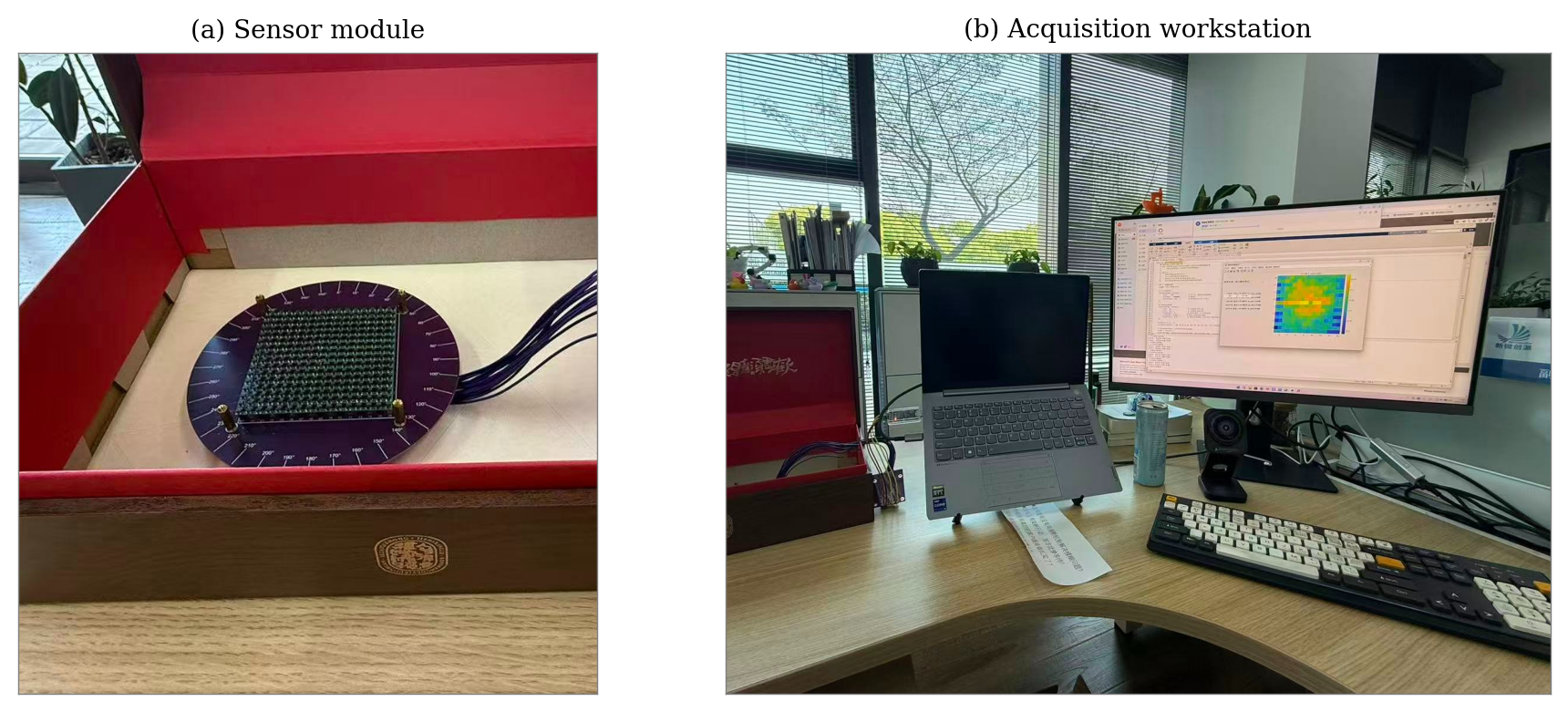}
  \caption{Stress-test sensor-wise residual distributions.
    (a)~Log-scale comparison: the corrupted distribution spans $\pm4$\degC;
    after RASC it concentrates near zero.
    (b)~Linear-scale view of the post-RASC residual.}
  \label{fig:stress_residuals}
\end{figure}

\section{Discussion}
\label{sec:discussion}

\subsection{When Does RASC Win, and Against What?}
RASC targets post-deployment recalibration of dense 2D arrays that have drifted
beyond factory specification and cannot be taken offline. Within its operating
envelope — static or slowly changing scene, no external stimulus, partial
node-failure tolerance required — RASC outperforms all no-training baselines and
closely approaches the oracle centralized EKF at $4\times$ lower bandwidth.

\subsection{Identifiability and Gauge Degeneracy}
\label{sec:gaugedisc}
In the stress-test experiment, the recovered parameters show RMS errors of
$0.125$ in gain and $2.68$\degC\ in offset, yet the field reconstruction RMSE is
only $0.148$\degC. The recovered $(\hat{a},\hat{b})$ lie on the gauge line
$\Delta a\cdot\bar{x}+\Delta b\approx0$ where $\bar{x}\approx22$\degC; indeed,
$0.125\times22\approx2.75\approx2.68$, confirming the predicted degeneracy. For
applications requiring only the calibrated field $\hat{x}_i=(y_i-\hat{b}_i)/\hat{a}_i$
— typical in temperature monitoring — the gauge ambiguity is harmless.

\subsection{Connection to Traditional Distributed Calibration}
RASC outperforms BMEP \cite{Bychkovskiy2003} by $14\%$ in field RMSE ($0.603$
vs.\ $0.695$\degC, 30 runs). BMEP flattens the dense 2D topology into a
sparse-graph problem; RASC keeps the dense 2D structure by using clusters as the
estimation unit and provides direct robustness gains via trimmed-mean field
reconstruction with explicit breakdown-point analysis (Theorem~\ref{thm:breakdown}).

\subsection{Limitations}
\begin{itemize}[leftmargin=*]
  \item The local smoothness assumption (Assumption~\ref{A1}) does not hold near
    sharp thermal edges; adaptive clustering that detects discontinuities is a
    natural extension.
  \item The reference subset $\Rset$ must remain undrifted. Periodic refreshing
    against a traceable temperature reference, or locating reference sensors in
    mechanically isolated PCB areas, is recommended for long-term operation.
  \item We restricted analysis to affine drift $y=ax+b$. Higher-order drift
    would require extra structure in Stage~4; the alternating estimation
    framework extends to local polynomial models.
  \item The real-data analysis is a single snapshot on one host PCB. A
    longitudinal study tracking the same array over several months is needed to
    confirm long-term efficacy.
\end{itemize}

\section{Conclusion}
\label{sec:conclusion}

We have proposed RASC, a five-stage region-aware self-calibration algorithm for
dense 2D sensor arrays that requires no external stimulation, no parametric field
model, and no labelled training data. Monte Carlo simulations at three array
scales show RASC matches the oracle centralized EKF within $0.10$\degC\ at
$\approx4\times$ lower bandwidth, with at most $14\%$ degradation under combined
$30\%$ node failure and $30\%$ packet loss. The main real-data result shows a
deployed BJT-based $16\times16$ array restored to factory-level accuracy
($3\sigma\approx0.045$\degC\ residual NU) by RASC using only $5\%$ reference
anchors, without taking the module offline. A semi-synthetic stress test with
$10\times$ larger drift recovers $91.3\%$ of the injected error.

Future directions include long-term field evaluations over multiple maintenance
periods, edge-aware clustering for piecewise-smooth fields, time-varying drift
models, and integration with periodic traceable-reference checks.

\section*{Data Availability}
The raw 7{,}632-frame $16\times16$ BJT temperature-sensor recording and the
open-source RASC reference implementation (\texttt{rasc\_core.py}, simulation
drivers, baseline implementations, figure-generation scripts) are available from
the corresponding author upon reasonable request. Random-seed protocols and
complete hyperparameter information are provided in Appendix~\ref{app:repro}.

\section*{Conflicts of Interest}
The authors declare no conflicts of interest.

\section*{Funding}
No external funds were received for this research.


\appendix

\section{Proof of Theorem~\ref{thm:cluster}}
\label{app:thm1}

Fix a cluster $c$ with $k=|\Mc_c|$ members and reference subset
$\Rset_c=\Mc_c\cap\Rset$. The Huber IRLS objective at iteration $\mathrm{em}$
is:
\[
  L^{(\mathrm{em})} = \sum_{j\in\Mc_c\setminus\Rset_c}\sum_{t=1}^T
    \psi_c\!\Bigl(\frac{y_j(t)-a_j\hat{x}_c^{(\mathrm{em})}(t)-b_j}{s_j^{(\mathrm{em})}}\Bigr).
\]
In each iteration, the E-step produces $\hat{x}_c^{(\mathrm{em}+1)}$ from
current parameter estimates; the M-step solves
$\min_{a,b}L^{(\mathrm{em}+1)}$ with $\hat{x}_c^{(\mathrm{em}+1)}$ fixed. Since
the trimmed mean is deterministic and the M-step is a sequential
majorisation-minimisation algorithm, we have $L^{(\mathrm{em}+1)}\le
L^{(\mathrm{em})}$. Q-linear convergence follows from positive-definiteness of
the cluster Hessian $X^\top WX$, which holds whenever
$\Delta x_j>0$ for at least one $j\in\Mc_c$.

\section{Proof of Theorem~\ref{thm:breakdown}}
\label{app:thm3}

The standard finite-sample replacement breakdown point for the $\gamma$-trimmed
mean of $k$ samples is $\lfloor\gamma k\rfloor/k$ \cite[Sec.~6.1]{Huber2009}.
An adversary replacing up to $\lfloor\gamma k\rfloor$ inlier readings cannot
move the trimmed mean outside the range of the remaining $(1-\gamma)k$ inliers.
Combined with the pre-clustering MAD screen, the effective tolerance is the
larger of the screening and trimming fractions. With $\eta=3$ and $\gamma=0.20$,
the screening rejects $0.27\%$ of Gaussian inliers and admits up to $20\%$
adversarial outliers, which are then trimmed.

\section{Reproducibility Information}
\label{app:repro}

\begin{itemize}[leftmargin=*]
  \item \textbf{Source code:} \texttt{rasc\_core.py} ($\approx250$ lines of
    NumPy) implements the algorithm; \texttt{run\_simulation.py} reproduces all
    simulation tables and figures.
  \item \textbf{Real sensor data:} 7{,}632 frames, $16\times16$ calibrated
    temperature sensor recording at $4.241$ Hz over $29.99$ minutes, as an
    \texttt{.xlsx} file with timestamps.
  \item \textbf{Random seeds:} Every Monte Carlo cell uses
    \texttt{np.random.default\_rng(seed)} with
    $\text{seed}=\text{run\_index}\in\{0,\ldots,29\}$ for bit-exact
    reproducibility.
  \item \textbf{Default hyperparameters:} $r_c=0.10$, $\eta=3.0$, $\alpha=0.5$,
    $N_{\min}=4$, $\gamma=0.20$, $n_\text{em}=5$, $n_\text{irls}=4$,
    $c_\text{Huber}=1.345$, $K_{\max}=10$, $\mathrm{tol}=0.01$\degC,
    $\rho=0.05$.
\end{itemize}


\begin{thebibliography}{99}

\bibitem{Wei2023}
Wei, R.; Lin, H.; Chen, Q.; Huang, G.; Hu, W.
A CMOS temperature sensor with a smart calibrated inaccuracy of $\pm0.11$\degC\ ($3\sigma$).
\textit{Sensors} \textbf{2023}, \textit{23}, 5132.
\url{https://doi.org/10.3390/s23115132}

\bibitem{Qin2022}
Qin, C.; Huang, Z.; Liu, Y.; Li, J.; Lin, L.; Tan, N.; Yu, X.
An energy-efficient BJT-based temperature sensor with $\pm0.8$\degC\ ($3\sigma$) inaccuracy from $-50$ to $150$\degC.
\textit{Sensors} \textbf{2022}, \textit{22}, 9381.
\url{https://doi.org/10.3390/s22239381}

\bibitem{Friedenberg1998}
Friedenberg, A.; Goldblatt, I.
Nonuniformity two-point linear correction errors in infrared focal plane arrays.
\textit{Opt.\ Eng.} \textbf{1998}, \textit{37}, 1251--1253.
\url{https://doi.org/10.1117/1.601890}

\bibitem{Ratliff2002}
Ratliff, B.M.; Hayat, M.M.; Hardie, R.C.
An algebraic algorithm for nonuniformity correction in focal-plane arrays.
\textit{J.\ Opt.\ Soc.\ Am.\ A} \textbf{2002}, \textit{19}, 1737--1747.
\url{https://doi.org/10.1364/JOSAA.19.001737}

\bibitem{Bychkovskiy2003}
Bychkovskiy, V.; Megerian, S.; Estrin, D.; Potkonjak, M.
A collaborative approach to in-place sensor calibration.
In \textit{Proc.\ IPSN 2003}; LNCS vol.\ 2634; Springer: Berlin, 2003; pp.\ 301--316.
\url{https://doi.org/10.1007/3-540-36978-3_20}

\bibitem{Wang2017}
Wang, Y.; Yang, A.; Chen, X.; Wang, P.; Wang, Y.; Yang, H.
A deep learning approach for blind drift calibration of sensor networks.
\textit{IEEE Sens.\ J.} \textbf{2017}, \textit{17}, 4158--4171.
\url{https://doi.org/10.1109/JSEN.2017.2703885}

\bibitem{FaghihNiresi2023}
Faghih Niresi, K.; Zhao, M.; Bissig, H.; Baumann, H.; Fink, O.
Spatial-temporal graph attention fuser for calibration in IoT air pollution monitoring systems.
In \textit{Proc.\ IEEE SENSORS}, Vienna, 2023; pp.\ 1--4.
\url{https://doi.org/10.1109/SENSORS56945.2023.10325090}

\bibitem{Maag2018}
Maag, B.; Zhou, Z.; Thiele, L.
A survey on sensor calibration in air pollution monitoring deployments.
\textit{IEEE Internet Things J.} \textbf{2018}, \textit{5}, 4857--4870.
\url{https://doi.org/10.1109/JIOT.2018.2853660}

\bibitem{OlfatiSaber2004}
Olfati-Saber, R.; Murray, R.M.
Consensus problems in networks of agents with switching topology and time-delays.
\textit{IEEE Trans.\ Autom.\ Control} \textbf{2004}, \textit{49}, 1520--1533.
\url{https://doi.org/10.1109/TAC.2004.834113}

\bibitem{Rogalski2012}
Rogalski, A.
History of infrared detectors.
\textit{Opto-Electron.\ Rev.} \textbf{2012}, \textit{20}, 279--308.
\url{https://doi.org/10.2478/s11772-012-0037-7}

\bibitem{Narendra1981}
Narendra, P.M.; Foss, N.A.
Shutterless fixed pattern noise correction for infrared imaging arrays.
In \textit{Technical Issues in Focal Plane Development}, Proc.\ SPIE vol.\ 282, 1981; pp.\ 44--51.

\bibitem{Perry1993}
Perry, D.L.; Dereniak, E.L.
Linear theory of nonuniformity correction in infrared staring sensors.
\textit{Opt.\ Eng.} \textbf{1993}, \textit{32}, 1854--1859.
\url{https://doi.org/10.1117/12.145601}

\bibitem{Harris1999}
Harris, J.G.; Chiang, Y.-M.
Nonuniformity correction of infrared image sequences using the constant-statistics constraint.
\textit{IEEE Trans.\ Image Process.} \textbf{1999}, \textit{8}, 1148--1151.
\url{https://doi.org/10.1109/83.777098}

\bibitem{Lu2023}
Lu, D.; Teng, L.; Ren, J.; Tan, J.; Wang, M.; Wang, L.; Gu, G.
Scene-based nonuniformity correction method using principal component analysis for infrared focal plane arrays.
\textit{Appl.\ Sci.} \textbf{2023}, \textit{13}, 13331.
\url{https://doi.org/10.3390/app132413331}

\bibitem{Ding2020}
Ding, S.; Wang, D.; Zhang, T.
A median-ratio scene-based non-uniformity correction method for airborne infrared point target detection system.
\textit{Sensors} \textbf{2020}, \textit{20}, 3273.
\url{https://doi.org/10.3390/s20113273}

\bibitem{Whitehouse2002}
Whitehouse, K.; Culler, D.
Calibration as parameter estimation in sensor networks.
In \textit{Proc.\ 1st ACM WSNA}, Atlanta, 2002; pp.\ 59--67.
\url{https://doi.org/10.1145/570738.570747}

\bibitem{Takruri2007}
Takruri, M.; Challa, S.
Drift aware wireless sensor networks.
In \textit{Proc.\ 10th Int.\ Conf.\ Information Fusion}, Quebec City, 2007; pp.\ 1--7.

\bibitem{Ahmad2024}
Ahmad, R.
Enhanced drift self-calibration of low-cost sensor networks based on cluster and advanced statistical tools.
\textit{Measurement} \textbf{2024}, \textit{236}, 115158.
\url{https://doi.org/10.1016/j.measurement.2024.115158}

\bibitem{Mahajan2025}
Mahajan, S.; Helbing, D.
Dynamic calibration of low-cost PM2.5 sensors using trust-based consensus mechanisms.
\textit{npj Clim.\ Atmos.\ Sci.} \textbf{2025}, \textit{8}, 257.
\url{https://doi.org/10.1038/s41612-025-01145-2}

\bibitem{Huber2009}
Huber, P.J.; Ronchetti, E.M.
\textit{Robust Statistics}, 2nd ed.; Wiley: Hoboken, NJ, USA, 2009.

\bibitem{Hampel1986}
Hampel, F.R.; Ronchetti, E.M.; Rousseeuw, P.J.; Stahel, W.A.
\textit{Robust Statistics: The Approach Based on Influence Functions};
Wiley: New York, NY, USA, 1986.

\bibitem{Chandola2009}
Chandola, V.; Banerjee, A.; Kumar, V.
Anomaly detection: a survey.
\textit{ACM Comput.\ Surv.} \textbf{2009}, \textit{41}, Article 15.
\url{https://doi.org/10.1145/1541880.1541882}

\bibitem{Xiao2004}
Xiao, L.; Boyd, S.
Fast linear iterations for distributed averaging.
\textit{Syst.\ Control Lett.} \textbf{2004}, \textit{53}, 65--78.
\url{https://doi.org/10.1016/j.sysconle.2004.02.022}

\bibitem{Sirocchi2022}
Sirocchi, C.; Bogliolo, A.
Topological network features determine convergence rate of distributed average algorithms.
\textit{Sci.\ Rep.} \textbf{2022}, \textit{12}, 21831.
\url{https://doi.org/10.1038/s41598-022-25974-w}

\bibitem{Dempster1977}
Dempster, A.P.; Laird, N.M.; Rubin, D.B.
Maximum likelihood from incomplete data via the EM algorithm.
\textit{J.\ R.\ Stat.\ Soc.\ Ser.\ B} \textbf{1977}, \textit{39}, 1--38.

\bibitem{Chung1997}
Chung, F.R.K.
\textit{Spectral Graph Theory}; CBMS vol.\ 92; AMS: Providence, RI, 1997.

\bibitem{Fiedler1973}
Fiedler, M.
Algebraic connectivity of graphs.
\textit{Czechoslov.\ Math.\ J.} \textbf{1973}, \textit{23}, 298--305.

\end{thebibliography}
\end{document}